\documentclass[12pt]{article}
\usepackage{a4}
\usepackage{amsthm}
\usepackage{amsmath}
\usepackage{amssymb}
\usepackage{amsfonts}
\usepackage{cite}
\usepackage{epsfig}
\usepackage{latexsym}
\usepackage{color}
\usepackage{xspace}
\usepackage{epsfig}
\usepackage[utf8]{inputenc}
\usepackage{cite}
\usepackage{hyperref}
\newtheorem{theorem}{Theorem}
\newtheorem{lemma}[theorem]{Lemma}
\newtheorem{proposition}[theorem]{Proposition}
\newtheorem{corollary}[theorem]{Corollary}

\newcommand{\FF}{\mathbb{F}}
\newcommand{\II}{\mathcal{I}}
\newcommand{\NN}{\mathbb{N}}
\newcommand{\ZZ}{\mathbb{Z}}
\newcommand{\QQ}{\mathbb{Q}}

\newcommand{\hy}{\hbox{-}\nobreak\hskip0pt}
\newcommand{\poly}{\textrm{poly}}

\newcommand{\NPh}{$\mathsf{NP}$\hy{}hard\xspace}
\newcommand{\FPT}{$\mathsf{FPT}$\xspace}
\newcommand{\XP}{$\mathsf{XP}$\xspace}
\newcommand{\W}[1]{$\mathsf{W[#1]}$\xspace}

\newcommand{\cl}{\operatorname{cl}}
\newcommand{\ec}{\operatorname{ec}}
\newcommand{\td}{\operatorname{td}}

\newcommand{\bd}{\operatorname{bd}}
\newcommand{\image}{\operatorname{Im}}
\newcommand{\lin}[1]{{\cal L}\left(#1\right)}

\def\ve#1{\mathchoice{\mbox{\boldmath$\displaystyle\bf#1$}}
{\mbox{\boldmath$\textstyle\bf#1$}}
{\mbox{\boldmath$\scriptstyle\bf#1$}}
{\mbox{\boldmath$\scriptscriptstyle\bf#1$}}}

\newcommand\veb{{\ve b}}

\newcommand\vel{{\ve l}}

\newcommand\veu{{\ve u}}

\newcommand\vex{{\ve x}}

\def\Z{\mathbb{Z}}

\begin{document}
\title{Matrices of optimal tree-depth and a~row-invariant parameterized algorithm for~integer programming\thanks{The first, second and fourth authors were supported by the European Research Council (ERC) under the European Union's Horizon 2020 research and innovation programme (grant agreement No 648509), the second, fourth and fifth by the MUNI Award in Science and Humanities of the Grant Agency of Masaryk University, and
       	the third by Charles University project UNCE/SCI/004 and by 19-27871X of GA \v{C}R. This publication reflects only its authors' view; the ERC Executive Agency is not responsible for any use that may be made of the information it contains. An extended abstract of this work has appeared in the proceedings of ICALP'20.}}
\author{Timothy F.~N. Chan\thanks{School of Mathematical Sciences, Monash University, Melbourne, Australia. E-mail: {\tt timothy.chan@monash.edu}.} \thanks{Mathematics Institute, DIMAP and Department of Computer Science, University of Warwick, UK.}\and
        Jacob W. Cooper\thanks{Faculty of Informatics, Masaryk University, Brno, Czech Republic. E-mails: {\tt jcooper@mail.muni.cz}, {\tt dkral@fi.muni.cz} and {\tt kristyna.pekarkova@mail.muni.cz}.}\and
        Martin Kouteck\'y\thanks{Computer Science Institute, Charles University, Prague, Czech Republic. E-mail: {\tt koutecky@iuuk.mff.cuni.cz}.}\and
\newcounter{kth}
\setcounter{kth}{3}
\newcounter{lth}
\setcounter{lth}{4}
        Daniel Kr\'al'$^{\fnsymbol{lth}\fnsymbol{kth}}$\and
	Krist\'yna Pek\'arkov\'a$^\fnsymbol{lth}$}
        
\date{}

\maketitle

\begin{abstract}
A long line of research on fixed parameter tractability of integer programming culminated with showing that
integer programs with~$n$ variables and a constraint matrix with dual tree-depth~$d$ and largest entry~$\Delta$
are solvable in time~$g(d,\Delta)\text{poly}(n)$ for some function~$g$.
However,
the dual tree-depth of a constraint matrix is not preserved by row operations,
i.e., a given integer program can be equivalent to another with a smaller dual tree-depth, and
thus does not reflect its geometric structure.

We prove that the minimum dual tree-depth of a row-equivalent matrix
is equal to the \emph{branch-depth} of the matroid defined by the columns of the matrix.
We design a fixed parameter algorithm for computing branch-depth of matroids represented over a finite field and
a fixed parameter algorithm for computing a row-equivalent matrix with minimum dual tree-depth.
Finally, we use these results to obtain an algorithm for integer programming
running in time~$g(d^*,\Delta)\text{poly}(n)$
where~$d^*$ is the branch-depth of the constraint matrix;
the branch-depth cannot be replaced by the more permissive notion of branch-width.
\end{abstract}

\section{Introduction}
\label{sec:intro}

Integer programming is a fundamental problem of both theoretical and practical importance.
It is well-known that integer programming in fixed dimension, i.e., with a bounded number of variables,
is polynomially solvable since the work of Kannan and Lenstra~\cite{Kannan:1987,Lenstra:1983} from the 1980s.
Much subsequent research has focused on studying extensions and speed-ups of the algorithm of Kannan and Lenstra. 
However, research on integer programs with many variables has been sparser.
Until relatively recently, the most prominent tractable case has been that of totally unimodular constraint matrices,
i.e., matrices with all subdeterminants equal to~$0$ or~$\pm 1$;
in this case, all vertices of the feasible region are integral and
algorithms for linear programming can be applied.

Besides total unimodularity, several recent results~\cite{HOR,MC,EisHK18,AH,HKW,GOR,GO,DvoEGKO17}
on algorithms for integer programming (IP) exploited various structural properties of the constraint matrix,
yielding efficient algorithms for~$n$-fold IPs, tree-fold IPs, multi-stage stochastic IPs, and
IPs with bounded fracture number and bounded tree-width.
This research culminated in an algorithm by Levin, Onn, and the third author~\cite{KouLO18},
who constructed a fixed parameter algorithm for integer programs
with bounded (primal or dual) tree-depth and bounded coefficients.

These theoretical results well complement a long line of empirical research,
see~\cite{BorFM98,BerCCFLMT15,KhaEE18,FerH98,AykPC04,WeiK71,WanR13,GamL10,VanW10}, demonstrating that
instances of integer programming can be solved efficiently when the constraint matrix is decomposable into blocks.
So, the recent algorithm of Levin, Onn and the third author~\cite{KouLO18}
gives a theoretical explanation for this phenomenon.
In particular, the Dantzig-Wolfe decomposition algorithm is known to work very well
when the constraint matrix has a so-called bordered block-diagonal form with small blocks.
It is usually necessary to describe this form explicitly
but Bergner et al.~\cite{BerCCFLMT15} have shown that
it can be constructed automatically (by permuting the rows of the constraint matrix),
which has had significant performance benefits on important benchmark instances.

However, the bordered block-diagonal form of a matrix and, more generally, the tree-depth of a constraint matrix,
depends on the position of its non-zero entries.
In particular, a matrix with large dual tree-depth, i.e., without any apparent bordered block-diagonal form,
may be row-equivalent to another matrix with small dual tree-depth and thus amenable to efficient algorithms.
We overcome this drawback with tools from matroid theory.
To do so, we consider the \emph{branch-depth} of the matroid defined by the columns of the constraint matrix and
refer to this parameter as the branch-depth of the matrix.
Since this matroid is invariant under row operations,
the branch-depth of a matrix is row-invariant and better captures the true geometry of the instance,
which can be obfuscated by the choice of basis.
Our algorithm thus allows taking the ``automated Dantzig-Wolfe'' approach of Bergner et al.~\cite{BerCCFLMT15} one step further:
it is possible to detect a block structure in a matrix even if it is obscured by row operations, not just by permuting the rows.

Our main results concerning integer programming
can be summarized as follows (we state the results formally in the next subsection).
\begin{itemize}
\item The branch-depth of a matrix~$A$ is equal to
      the minimum dual tree-depth of a matrix row-equivalent to~$A$ (Theorem~\ref{thm:equal}).
\item There exists a fixed parameter algorithm for computing a matrix of minimum dual tree-depth that
      is row-equivalent to the input matrix and
      whose entry complexity stays bounded (Theorem~\ref{thm:algrational}).
\item Integer programming is fixed parameter tractable
      when parameterized by the branch-depth and the entry complexity of the constraint matrix (Corollary~\ref{cor:ip}).
\end{itemize}
Existing hardness results imply that
the parameterization by both branch-depth and entry complexity in Corollary~\ref{cor:ip} is necessary unless \FPT = \W{1},
i.e., it is not sufficient to parameterize only by one of the two parameters.
In particular,
integer programming is \W{1}-hard when parameterized by tree-depth only~\cite{GOR,KnoKM17} and
\NPh for instances with bounded coefficients and dual tree-width (even dual path-width) bounded by two~\cite[Lemma 102]{EisHKKLO19} (also cf.~\cite{KouLO18,GOR}).
The latter also implies that
integer programming is \NPh when the branch-width and the entry complexity of input instances are bounded (also cf.~\cite{FomPRS18}).
On the positive side, Cunningham and Geelen~\cite{CunG07} (also cf.~\cite{MarMH13} for detailed proofs and implementation)
provided a slicewise pseudopolynomial algorithm for non-negative matrices with bounded branch-width,
i.e., the problem belongs to the complexity class \XP for unary encoding of input.
Finally, since the algorithm given in Corollary~\ref{cor:ip} is parameterized
by the branch-depth of the vector matroid formed by the columns of the matrix~$A$,
it is natural to ask whether the tractability also holds in the setting dual to this one,
i.e., when the branch-depth of the vector matroid formed by the \textit{rows} of~$A$ is bounded.
This hope is dismissed in Proposition~\ref{prop:bdp}.

The algorithm from Theorem~\ref{thm:algrational} is based on the following
algorithmic result on matroid branch-depth, which we believe to be of independent interest.
\begin{itemize}
\item There exists a fixed parameter algorithm for computing an optimal branch-depth decomposition of
      a matroid represented over a finite field
      for the parameterization by the branch-depth and the order of the field (Theorem~\ref{thm:algfin}).
\end{itemize}      
To apply this result,
we show that every matroid represented by rational vectors that has bounded branch-depth
is isomorphic to a matroid representable over a finite field (Lemma~\ref{lm:largeq}),
where the order of the field depends on the branch-depth and the entry complexity of the rational vectors.

We would like to point out that the algorithm from Theorem~\ref{thm:algfin} is fully combinatorial,
similarly to the recent algorithm for computing the branch-width of matroids represented over a finite field
by Jeong, Kim and Oum~\cite{JeoKO18},
which extends the classical algorithm for tree-width by Bodlaender and Kloks~\cite{BodK96}, and
unlike the older algorithm by Hlin\v e{}n\'y and Oum~\cite{HliO07,HliO08},
which relies on an exponential upper bound on the size of excluded minors for branch-width by Geelen et al.~\cite{GeeGRW03} and
needs to precompute the list of excluded minors.
While it would likely be possible to follow a similar path in the setting of branch-depth,
we chose the more challenging route of designing a fully combinatorial algorithm,
i.e., one that is based on an explicit dynamic programming procedure.
The benefit of a fully combinatorial approach is that the hidden constants are better,
which is of importance to applications including those in model checking~\cite{GavKO12,Hli03a,Hli03b,Hli06};
in particular, 
Hlin\v en\'y~\cite{Hli03a,Hli03b,Hli06} (in the analogy of Courcelle's result~\cite{Cou90} for graphs) proved that
monadic second order model checking is fixed parameter tractable for matroids with bounded branch-width represented over finite fields.

\subsection{Statement of integer programming results}

To state our integer programming results precisely, we first need to fix some notation.
Vectors throughout our exposition will be written in the bold font.
We consider the general integer programming problem in standard form:
\begin{equation}
\label{IP}
\min\left\{f(\vex) \, \mid A\vex=\veb\,,\ \vel\leq\vex\leq\veu\,,\ \vex\in\ZZ^{n}\right\},
\end{equation}
where~$A\in\ZZ^{m\times n}$ is an integer~$m\times n$ matrix,
$\veb\in\ZZ^m$, $\vel,\veu\in(\ZZ\cup\{\pm\infty\})^n$, and
$f: \ZZ^n \to \ZZ$ is a separable convex function,
i.e., $f(\vex) = \sum_{i=1}^n f_i(x_i)$ where~$f_i: \Z \to \Z$ are convex functions.
In particular, each~$f_i(x_i)$ can be a linear function of~$x_i$.
Integer programming is well-known to be \NPh even
when~$f$ is a constant function and
either the entries of~$A$ are~$0$ and $+1$ (by a reduction from the Vertex Cover problem) or
$m=1$ (by a reduction from the Subset Sum problem).

We next demonstrate the previously mentioned drawback of parameterizing integer programs by tree-depth.
The \emph{tree-depth} of a graph~$G$ is the minimum depth of a rooted forest on the same vertex set such that
the two end-vertices of every edge of~$G$ are in ancestor-descendant relation.
The \emph{dual tree-depth} of a matrix~$A$
is the tree-depth of the graph with vertices corresponding to the rows of~$A$ and with two vertices being adjacent if the corresponding rows have a non-zero entry in the same column.
We define the branch-depth,
which require definitions from matroid theory,
and primal tree-depth of a matrix,
in Section~\ref{sec:notation}.
Consider the following matrices~$A$ and $A'$.
\[
A=\left(\begin{matrix}
  1 & 1 & \cdots & 1 & 1 \\
  2 & 1 & \cdots & 1 & 1 \\
  1 & 2 & \ddots & 1 & 1 \\
  \vdots & \ddots & \ddots & \ddots & 1 \\
  1 & 1 & \ddots & 2 & 1 \\
  1 & 1 & \cdots & 1 & 2
  \end{matrix}\right)
\qquad
A'=\left(\begin{matrix}
   1 & 1 & \cdots & 1 & 1 \\
   1 & 0 & \cdots & 0 & 0 \\
   0 & 1 & \ddots & 0 & 0 \\
   \vdots & \ddots & \ddots & \ddots & 0 \\
   0 & 0 & \ddots & 1 & 0 \\
   0 & 0 & \cdots & 0 & 1
   \end{matrix}\right)
  \]
The dual tree-depth of the matrix~$A$ is equal to its number of rows while the dual tree-depth of~$A'$ is two;
the graphs from the definition of the dual tree-depth are a complete graph and a star, respectively.
We remark that the branch-depth of both matrices is two.
Since the matrices~$A$ and $A'$ are row-equivalent,
the integer programs determined by them ought to be of the same computational difficulty.
More precisely, consider the following matrix~$B$:
\[
B=\left(\begin{matrix}
  1 & 0 & 0 & \cdots & 0 & 0 \\
  -1 & 1 & 0 & \cdots & 0 & 0 \\
  -1 & 0 & 1 & \ddots & 0 & 0 \\
  -1 & \vdots & \ddots & \ddots & \ddots & 0 \\
  -1 & 0 & 0 & \ddots & 1 & 0 \\
  -1 & 0 & 0 & \cdots & 0 & 1
  \end{matrix}\right)
  \enspace.
  \]
Since~$A'=BA$, it is possible to replace an integer program of the form \eqref{IP}
with an integer program with a constraint matrix~$A'=BA$, right hand side~$\veb'=B\veb$, and
bounds~$\vel'=\vel$ and $\veu'=\veu$, and
solve this new instance of IP,
which has dual tree-depth two and the same set of feasible solutions.

In Section~\ref{sec:bdtd}, we first observe that the branch-depth of a matrix~$A$ is at most its dual tree-depth, and
prove that the branch-depth of a matrix~$A$ is actually equal to the minimum dual tree-depth of a matrix~$A'$ that
is row-equivalent to~$A$.
\begin{theorem}
\label{thm:equal}
Let~$A$ be a matrix over a (finite or infinite) field~$\FF$.
The branch-depth of~$A$ is equal to the minimum dual tree-depth of any matrix~$A'$ that is row-equivalent to~$A$.
\end{theorem}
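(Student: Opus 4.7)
The plan is to prove Theorem~\ref{thm:equal} by establishing the two inequalities between $\bd(A)$ and the minimum of $\td_D(A')$ over matrices $A'$ row-equivalent to $A$.

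\textbf{Upper bound on $\bd(A)$.} I would first show $\bd(A) \le \td_D(A)$; combined with the row-invariance of $\bd$ noted in the excerpt, this yields $\bd(A) \le \td_D(A')$ for every $A'$ row-equivalent to $A$. Given a tree-depth decomposition (a rooted forest) $F$ of $G_D(A)$ of depth $d$, I construct a branch-depth decomposition of the column matroid of depth $d$. The key observation is that the support of every column of $A$ is a clique in $G_D(A)$ and therefore lies on a single root-to-leaf path of $F$. I would assign each column to the deepest row of its support (viewed as a vertex of $F$), attach the columns assigned to each vertex as leaves beneath that vertex, and verify that the rooted tree thus obtained satisfies the rank conditions in the definition of branch-depth.

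\textbf{Lower bound on $\bd(A)$.} The harder direction is to produce, from a branch-depth decomposition of the column matroid of $A$ of depth $d$, a matrix $A'$ row-equivalent to $A$ with $\td_D(A') \le d$. Let $T$ be a rooted tree witnessing $\bd(A) = d$, whose leaves correspond to the columns of $A$, and for a node $v \in T$ let $S_v$ denote the set of columns in the subtree of $T$ rooted at $v$. Processing $T$ from the root downwards, I would allocate to each node $v$ a set $R_v$ of vectors in the row span of $A$ such that (i) the union of the sets $R_u$ over all ancestors $u$ of $v$ (including $v$ itself) spans the row space of the submatrix of $A$ on the columns $S_v$, and (ii) every vector of $R_v$ is supported only on $S_v$. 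Assembling these vectors as the rows of $A'$ yields a matrix row-equivalent to $A$; since each row of $A'$ is supported on some $S_v$, any two rows sharing a non-zero column have a common ancestor in $T$, so $T$ itself witnesses $\td_D(A') \le d$.

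\textbf{Main obstacle.} The crux of the argument is the row allocation in the lower bound. I need to show that the rank hierarchy implicit in a branch-depth decomposition exactly matches the room available at each node: the dimension of the ``new'' row information at $v$, beyond what has already been contributed by the ancestors of $v$, must equal the number of rows of $A'$ that can be placed at $v$. This likely rests on a submodularity (or rank-additivity) argument applied to the cuts in $T$, combined with a careful inductive choice of $R_v$ that uses the rank bounds of the branch-depth decomposition to prevent rows from ``leaking'' outside their assigned subtrees. Orchestrating the row operations on $A$ so that they simultaneously respect the span requirements and the support restrictions is the main technical hurdle.
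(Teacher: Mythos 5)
Your overall architecture is correct, and your easy direction is essentially the paper's Proposition~\ref{prop:bdtd}: the support of each column is a clique in $G_D(A)$, hence lies on a single root-to-leaf path of the tree-depth forest, and mapping each column to a leaf below that path gives a depth-decomposition of the column matroid; combined with the row-invariance of branch-depth this yields $\bd(A)\le\td_D(A')$ for every row-equivalent $A'$. The hard direction is also the same in spirit as the paper's: the paper assigns to each non-root vertex $v$ of $T$ one basis vector of the column space (an ``extended depth-decomposition''), which after the change of basis $A'=B^{-1}A$ is exactly your one-row-per-vertex allocation with the support restricted to the columns mapped into the subtree of $v$.

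However, the step you label ``the main obstacle'' is the actual content of the proof (the whole of Section~\ref{sec:extended}), and your sketch both leaves it unproven and misses the two ideas needed to carry it out. First, the allocation you describe need not be achievable for the given tree $T$: the paper first normalizes the decomposition (Lemma~\ref{lm:capacity}) by repeatedly re-rooting any primary branch $S$ that is not \emph{at capacity}, i.e., for which $\dim\widehat{S}<h+\|S\|$; this preserves the depth but changes the tree, so your claim that ``$T$ itself witnesses $\td_D(A')\le d$'' must be weakened to ``some tree of the same depth does''. Indeed, your assertion that the dimension of the new row information at $v$ ``must equal'' the number of rows placed at $v$ is false for a branch that is not at capacity. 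Second, the consistency of the rows placed on a shared path is not a generic submodularity computation: once all primary branches are at capacity, Lemma~\ref{lm:intersect} shows that their linear hulls pairwise intersect in one \emph{common} $h$-dimensional subspace $K$; a basis of $K$ is what gets assigned to the $h$ shared non-root vertices, and the construction then recurses on the quotient matroids $\widehat{S_i}/K$ with the branches $S_i$ as their depth-decompositions (Theorem~\ref{thm:extended}). Without the capacity normalization the pairwise intersections of the branch hulls need not coincide, and then there is no single choice of rows for the shared path that works simultaneously for all branches below it. So the proposal is a correct plan whose decisive technical step is missing.
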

We use the tools developed to prove Theorem~\ref{thm:equal} together with existing results on matroid branch-depth
to obtain an algorithm that given a matrix~$A$ of small branch-depth
outputs a matrix~$B$ that transforms~$A$ to a row-equivalent matrix with small dual tree-depth.
The \emph{entry complexity of a matrix~$A$}, denoted by~$\ec(A)$,
is the maximum length of the binary encoding of an entry~$A_{ij}$ (the length of
binary encoding a rational number~$r=p/q$ with~$p$ and $q$ being coprime is~$\left\lceil\log_2\left(|p|+1\right)\right\rceil+\left\lceil\left(\log_2 |q|+1\right)\right\rceil$).
An algorithm is called \emph{fixed parameter}
if its running time for an instance of size~$n$
is bounded by~$f(k)\poly(n)$
where~$f:\NN\to\NN$ is a computable function and $k$ is a parameter determined by the instance.
\begin{theorem}
\label{thm:alg1}
There exists an algorithm
with running time polynomial in~$\ec(A)$, $n$ and $m$ that for an input~$m\times n$ integer matrix~$A$ and an integer~$d$
either
\begin{itemize}
\item outputs that the branch-depth of~$A$ is larger than~$d$, or
\item outputs an invertible rational matrix~$B\in\QQ^{m \times m}$ such that the dual tree-depth of~$BA$ is at most~$4^d$ and
      the entry complexity of~$BA$ is~$O(d2^{2d}\ec(A))$.
\end{itemize}
\end{theorem}
However, we go further and design a fixed parameter algorithm
for computing the branch-depth of a vector matroid (Theorem~\ref{thm:algrational}) and
use this algorithm to prove the following strengthening of Theorem~\ref{thm:alg1}.
\begin{theorem}
\label{thm:alg2}
There exists a fixed parameter algorithm parameterized by~$d$ and $e$
with running time polynomial in $n$ and $m$ that for an input~$m\times n$ integer matrix~$A$ with entry complexity at most $e$ and an integer~$d$
either
\begin{itemize}
\item outputs that the branch-depth of~$A$ is larger than~$d$, or
\item outputs an invertible rational matrix~$B\in\QQ^{m \times m}$ such that the dual tree-depth of~$BA$
      is equal to the branch-depth of~$A$ and the entry complexity of~$BA$ is~$O(d^22^{2d}\ec(A))$.
\end{itemize}
\end{theorem}
While the algorithm in Theorem~\ref{thm:alg1} runs in polynomial time,
the output matrix can have dual tree-depth (single) exponential in the minimum possible tree-depth;
on the other hand, the running time of the algorithm from Theorem~\ref{thm:alg2}
is triple exponential in~$d$ but it always outputs a row-equivalent matrix with the minimum possible tree-depth.
Also see Table~\ref{tab:complexity} for a comparison of the algorithms from Theorems~\ref{thm:alg1} and \ref{thm:alg2}.

\begin{table}
\begin{center}
\begin{tabular}{|l|cc|}
\hline
Algorithm from & Theorem~\ref{thm:alg1} & Theorem~\ref{thm:alg2} \\
\hline
Hidden constant & none & $2^{\ec(A)2^{2^{2d+2}+O(d)}} $ \\
Tree-depth &	at most~$2^{2d}$	& $d$ \\
Entry complexity & $O\left(d2^{2d}\ec(A)\right)$ & $O\left(d^22^{2d}\ec(A)\right)$ \\
\hline
IP algorithm constant & $2^{O\left(\ec(A)d 2^{\ec(A)d 2^{2d}+2^{2d}+4d}\right)}$
                      & $2^{O\left(\ec(A)d^3 2^{\ec(A)d^2 2^{2d}+3d}\right)}$ \\
\hline
\end{tabular}
\end{center}
\caption{Comparison of algorithms presented in Theorems~\ref{thm:alg1} and \ref{thm:alg2}.
         The first line describes the dependence on the parameter~$d$,
	 the second and the third the tree-depth and the entry complexity of the constraint matrix output by the algorithm, and
	 the last line time needed to solve the instance by the algorithm of Eisenbrand et al.~\cite{EisHKKLO19}.}
\label{tab:complexity}
\end{table}

As explained above, Theorems~\ref{thm:alg1} and \ref{thm:alg2} allow us to perform row operations
to obtain an equivalent integer program with small dual tree-depth from an integer program with small branch-depth.
The function~$g$ depends on which of the theorems is used to find the matrix~$B$ as displayed in Table~\ref{tab:complexity}.
A proof of the corollary using Theorem~\ref{thm:alg1} is presented in Section~\ref{sec:param};
a proof using Theorem~\ref{thm:alg2} is completely the same except that
the obtained matrix~$A'$ has dual tree-depth at most~$d$ and its entry complexity is as given in Table~\ref{tab:complexity}.

\begin{corollary}
\label{cor:ip}
Integer programming is fixed parameter tractable when parameterized by branch-depth and entry complexity,
i.e., an integer program given as in \eqref{IP}
can be solved in time polynomial in~$g(\bd(A),\ec(A))$, $n$, $\ec(\veb)$, $\ec(\vel)$ and $\ec(\veu)$,
where~$\bd(A)$ is the branch-depth of the matrix~$A$ and $g:\NN^2\to\NN$ is a computable function.
\end{corollary}

We remark that the results of~\cite{EisHKKLO19,KouLO18} give a strongly fixed parameter algorithm,
i.e., an algorithm whose number of arithmetic operations does not depend on the size of the numbers involved,
if the objective function~$f$ is a linear function, and
so the algorithm from Corollary~\ref{cor:ip} is strongly polynomial in~$g(\bd(A),\ec(A))$ and $n$
when the objective function is linear.

We note that the dependence of~$g$ on~$\ec(A)$ and $\bd(A)$ is double and triple exponential, respectively,
regardless whether we use Theorem~\ref{thm:alg1} or Theorem~\ref{thm:alg2} to prove Corollary~\ref{cor:ip}.
However, a rather pessimistic estimate on the entry complexity of the integer matrix~$A''$
was used in the proof of Corollary~\ref{cor:ip} and
it is likely that the constraint matrix with a row-equivalent matrix with a significantly smaller dual tree-depth
likely outweighs the increase of the entry complexity
since the parameter dependence in the algorithm of~\cite{EisHKKLO19}, which is a refined version of the algorithm from~\cite{KouLO18},
is~$2^{\left(\ec(A)+\td_D(A)\right)\td_D(A) 2^{\td_D(A)}}=2^{\ec(A)2^{O\left(\td_D(A)\right)}}$.

\subsection{Structure of the paper}

We now briefly describe how the paper is organized.
In Section~\ref{sec:notation}, we introduce notation used in this paper,
in particular the notions of (dual) tree-depth and branch-depth of matrices, and
provide the relevant background on matroids.
We also give the definition of an extended depth-decomposition,
which is a depth-decomposition of a matroid enhanced with information on the structure of subspaces
represented by branches of the decomposition.
Section~\ref{sec:struct1} is devoted to proving a structural result on matroids,
which establishes the existence of extended depth-decompositions with optimal depth.
The results of Section~\ref{sec:struct1} are used to prove that
the branch-depth of a matrix is equal to the minimum tree-depth of a row-equivalent matrix in Section~\ref{sec:bdtd}.
We next apply these results in Section~\ref{sec:param} to prove Theorem~\ref{thm:alg1} and Corollary~\ref{cor:ip},
which implies that integer programming is fixed parameter tractable
when parameterized by the branch-depth and the entry complexity of the constraint matrix.

The rest of the paper is devoted to computing optimal branch-depth of matroids and matrices.
As a preparation for proofs of our main results,
we develop additional tools to manipulate depth-decompositions of vector matroids in Section~\ref{sec:struct2}.
These tools are used in Section~\ref{sec:finite} to construct a dynamic programming algorithm
for computing optimal branch-depth decompositions of matroids represented over finite fields.
Finally, in Section~\ref{sec:rational},
we use the algorithm from Section~\ref{sec:finite} to design a fixed parameter algorithm
for computing optimal branch-depth decompositions of matroids represented over rationals and
for computing row-equivalent matrices with optimal branch-depth (Theorem~\ref{thm:algrational}).

\section{Notation}
\label{sec:notation}

In this section, we fix the notation used throughout the paper, present the notions of graph tree-depth and matroid branch-depth,
and include relevant results concerning them that we will need later.
To avoid our presentation becoming cumbersome through adding or subtracting one at various places,
we define the \emph{depth} of a rooted tree to be the maximum number of edges on a path from the root to a leaf, and
define the \emph{height} of a rooted tree to be the maximum number of vertices on a path from the root to a leaf,
i.e., the height of a rooted tree is always equal to its depth plus one.
The height of a rooted forest~$F$ is the maximum height of a rooted tree in~$F$.
The \emph{depth} of a vertex in a rooted tree 
is the number of edges on the path from the root to that particular vertex;
in particular, the depth of the root is zero.
The \emph{closure~$\cl(F)$} of a rooted forest $F$ is the graph obtained
by adding edges from each vertex to all its descendants.
Finally,
the \emph{tree-depth~$\td(G)$} of a graph~$G$ is the minimum height of a rooted forest~$F$ such that
the closure~$\cl(F)$ of the rooted forest~$F$ contains~$G$ as a subgraph.
It can be shown that 
the path-width of a graph~$G$ is at most its tree-depth~$\td(G)$ minus one,
and in particular, the tree-width of~$G$ is at most its tree-depth minus one (see e.g.~\cite{CygFKLMPPS15} for the definitions of path-width and tree-width).
As \cite{KarKLM17} is one of our main references, we would like to highlight that
the tree-depth as used in~\cite{KarKLM17} is equal to the minimum depth of a rooted tree~$T$ such that $G\subseteq\cl(T)$;
however, we here follow the definition of tree-depth that is standard.

The \emph{primal graph} of an~$m\times n$ matrix~$A$
is the graph~$G_P(A)$ with vertices~$\{1,\ldots,n\}$,
i.e., its vertices one-to-one correspond to the columns of~$A$,
where vertices~$i$ and $j$ are adjacent if~$A$ contains a row whose~$i$-th and $j$-th entries are non-zero.
The \emph{primal tree-depth~$\td_P(A)$} of a matrix~$A$ is the tree-depth of its primal graph.
Analogously, the \emph{dual graph} of~$A$
is the graph~$G_D(A)$ with vertices~$\{1,\ldots,m\}$,
i.e., its vertices one-to-one correspond to the rows of~$A$,
where vertices~$i$ and $j$ are adjacent if~$A$ contains a column whose~$i$-th and $j$-th entries are non-zero.
Note that the dual graph~$G_D(A)$ is isomorphic to the primal graph of the matrix transpose~$A^{T}$.
Finally,
the \emph{dual tree-depth} of~$A$, which is denoted by~$\td_D(A)$, is the tree-depth of the dual graph~$G_D(A)$.

Before introducing the notion of the branch-depth of a matroid, we review basic definitions from matroid theory. A detailed introduction to matroid theory can be found in one of the standard textbooks on the topic, e.g.~\cite{Oxl11},
but we include a brief overview of relevant concepts for completeness.
A \emph{matroid} $M$ is a pair~$(X,\II)$,
where~$\II$ is a non-empty hereditary collection of subsets of~$X$ that satisfies the \emph{augmentation axiom}.
More specifically, the collection~$\II$ is hereditary if~$\II$ contains all subsets of~$X'$ for every~$X'\in\II$, and
the augmentation axiom asserts that for all~$X'\in\II$ and $X''\in\II$ with~$|X'|<|X''|$,
there exists an element~$x\in X''\setminus X'$ such that $X'\cup\{x\}\in\II$.
The sets contained in~$\II$ are referred to as \emph{independent}.
The \emph{rank} $r(X')$ of a set~$X'\subseteq X$ is the maximum size of an independent subset of~$X'$;
the rank~$r(M)$ of a matroid~$M=(X,\II)$ is the rank of~$X$ and
an independent set of size~$r(M)$ is a \emph{basis} of~$M$.
A \emph{circuit} is a set~$X'\subseteq X$ such that $X'$ is not independent but every proper subset of~$X'$ is.
Two elements~$x$ and $x'$ of~$X$ are said to be \emph{parallel} if~$r(\{x\})=r(\{x'\})=r(\{x,x'\})=1$, and an element~$x$ is a \emph{loop} if~$r(\{x\})=0$.

Two particular examples of matroids are graphic matroids and vector matroids.
If~$G$ is a graph, then the pair~$(E(G),\II)$
where~$\II$ contains all acyclic subsets of edges of~$G$
is a matroid and is denoted by~$M(G)$; matroids of this kind are called \emph{graphic matroids}.
If~$X$ is a set of vectors of a vector space and $\II$ contains all subsets of~$X$ that are linearly independent,
then the pair~$(X,\II)$ is a matroid;
matroids of this kind are \emph{vector matroids}.
In the setting of vector matroids, the rank of~$X'\subseteq X$ is the dimension of the linear hull of~$X'$.
If~$(X,\II)$ is a vector matroid, we write~$\lin{X'}$ for the linear hull of the vectors contained in~$X'\subseteq X$ and
abuse the notation by writing~$\dim X'$ for~$\dim\lin{X'}$.

In what follows, we will need a notion of a quotient of a vector space, which we now recall.
If~$A$ is a vector space and $K$ a subspace of~$A$,
the \emph{quotient space} $A/K$ is a vector space of dimension~$\dim A-\dim K$
obtained from~$A$ by considering cosets of~$A$ given by~$K$ and
inheriting addition and scalar multiplication from~$A$;
see e.g.~\cite{Hal93} for further details.
One can show that for every subspace~$K$ of~$A$,
there exists a subspace~$B$ of~$A$ with dimension~$\dim A-\dim K$ such that
each coset contains a single vector from~$B$,
i.e., every vector~$w$ of~$A$ can be uniquely expressed as the sum of a vector~$w_B$ of~$B$ and a vector~$w_K$ of~$K$.
We call the vector~$w_B$ the \emph{quotient} of~$w$ by~$K$.
Note that the quotient of a vector is not uniquely defined by~$K$, however,
it becomes uniquely defined when the subspace~$B$, which intersects each coset at a single vector, is fixed.

If~$M=(X,\II)$ is a matroid and $X'\subseteq X$,
then the \emph{restriction} of~$M$ to~$X'$, which is denoted by~$M\left[X'\right]$, is the matroid~$(X',\II\cap 2^{X'})$.
The \emph{contraction} of~$M$ by~$X'$, which is denoted by~$M/X'$,
is the matroid with the elements~$X\setminus X'$ such that
a set~$X''\subseteq X\setminus X'$ is independent in~$M/X'$ if and only if~$r(X''\cup X')=\lvert X''\rvert+r(X')$.
Analogously, if~$M=(X,\II)$ is a vector matroid and $K$ a subspace of the linear hull of~$X$,
then the matroid~$M/K$ obtained by \emph{contracting} along the subspace~$K$ is the matroid
whose elements are the vectors of~$X$, with a subset~$X'\subseteq X$ being independent in~$M/K$ if~$X'$ is linearly independent in the quotient space~$\lin{X}/K$.
Note that if~$X'\subseteq X$, then~$M/X'$ is the matroid obtained by contracting along the linear hull of~$X'$ and
then restricting to the subset~$X\setminus X'$.

A matroid~$M$ is said to be \emph{connected} if every two (distinct) non-loop elements of~$M$ are contained in a circuit.
A set~$X'\subseteq X$ of a matroid is a \emph{component} of~$M$
if it is an inclusion-wise maximal subset of~$X$ such that the matroid~$M\left[X'\right]$ is connected and $X'$ is loop-free.
Equivalently, $X'$ is a component of~$M$
if it is an inclusion-wise minimal non-empty subset of~$X$ such that $r(X')+r(X\setminus X')=r(M)$.
If~$M$ is a vector matroid with rank at least one,
then~$M$ is connected if and only if
$M$ has no loops and
there do not exist two vector spaces~$A$ and $B$ such that $A\cap B$ contains the zero vector only,
both~$A$ and $B$ contains a non-loop element of~$M$ and
every element of~$M$ is contained in~$A$ or~$B$.

Since several algorithms presented in this paper work with matroids,
we must fix the way that the time complexity of algorithms involving matroids is measured.
All algorithms that we present work with vector matroids and
we assume that matroids are given by their vector representation.
It is then possible to use standard linear algebra algorithms to determine
which subsets of the elements of such matroids are independent.
Hence, we say that an algorithm is polynomial-time
if its running time is polynomial in the number of elements of the matroid and the complexity of its vector representation.
We remark that some of the algorithms that we use, in particular the one in Theorem~\ref{thm:approx},
are also polynomial in the more demanding setting when matroids are given by the independence oracle,
which we do not consider here.

A \emph{depth-decomposition} of a matroid~$M=(X,\II)$ is a pair~$(T,f)$,
where~$T$ is a rooted tree and $f$ is a mapping from~$X$ to the leaves of~$T$ such that
the number of edges of~$T$ is the rank of~$M$ and
the following holds for every subset~$X'\subseteq X$:
the rank of~$X'$ is at most the number of edges contained in paths from the root to the vertices~$f(x)$, $x\in X'$.
The \emph{branch-depth~$\bd(M)$} of a matroid~$M$ is the smallest depth of
a tree~$T$ that forms a depth-decomposition of~$M$.
For example, if~$M=(X,\II)$ is a matroid of rank~$r$, $T$ is a path with~$r$ edges rooted at one of its end vertices, and
$f$ is a mapping such that $f(x)$ is equal to the (non-root) leaf of~$T$ for all~$x\in X$,
then the pair~$(T,f)$ is a depth-decomposition of~$M$.
In particular, the branch-depth of any matroid~$M$ is well-defined and is at most the rank of~$M$.
We remark that the notion of matroid branch-depth given here is the one defined in~\cite{KarKLM17};
another matroid parameter, which is also called branch-depth but is different from the one that we use here,
is defined in~\cite{DevKO19}.
Finally,
the \emph{branch-depth~$\bd(A)$} of a matrix~$A$ is the branch-depth of the vector matroid formed by the columns of~$A$.
Since the vector matroid formed by the columns of a matrix~$A$ and
the vector matroid formed by the columns of any matrix row-equivalent to~$A$ are the same,
the branch-depth of~$A$ is invariant under row operations.

Similarly to the relation between tree-depth of graphs and long paths,
branch-depth is related to the existence of long circuits in a matroid~\cite[Proposition 3.4]{KarKLM17}.
\begin{proposition}
\label{prop:circuit}
Let~$M$ be a matroid.
If the branch-depth of~$M$ is~$d$,
then every circuit of~$M$ has at most~$2^d$ elements.
\end{proposition}
Kardo\v s et al.~\cite{KarKLM17} also established the following relations between the tree-depth of a graph~$G$ and
the branch-depth of the associated matroid~$M(G)$.
It is worth noting that Proposition~\ref{prop:bdtd2} does not hold without the assumption on~$2$-connectivity of a graph~$G$:
the tree-depth of an~$n$-vertex path is~$\lfloor\log_2 n\rfloor$,
however, its matroid is formed by~$n-1$ independent elements, i.e, its branch-depth is one.
\begin{proposition}
\label{prop:bdtd1}
For any graph~$G$,
the branch-depth of the graphic matroid~$M(G)$ is at most the tree-depth of the graph~$G$ decreased by one.
\end{proposition}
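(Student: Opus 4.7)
My plan is to use an optimal tree-depth decomposition of $G$ almost directly as a depth-decomposition of $M(G)$. Take $F$ to be a rooted forest of height $h=\td(G)$ with $G\subseteq\cl(F)$; for each edge $e=uv\in E(G)$ one of $u,v$ is an ancestor of the other in $F$ (since $uv\in\cl(F)$), and I denote the deeper one by $v_e$.

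I will first treat the case where $G$ is connected. All of $V(G)$ then lies in a single tree $F_0$ of $F$, and I will set $T:=F_0$ with its inherited root and, for each $e\in E(G)$, define $f(e)$ to be any leaf of $T$ descending from $v_e$ (taking $v_e$ itself when it is already a leaf). The depth of $T$ is $\text{height}(F_0)-1\leq h-1$, and $|E(T)|=|V(F_0)|-1=|V(G)|-1$ equals the rank of $M(G)$, so only the rank inequality needs verification. Given $X'\subseteq E(G)$, let $T'$ be the subtree of $T$ formed by the union of the root-to-$f(e)$ paths over $e\in X'$. The crucial step is to observe that $V(X')\subseteq V(T')$: for any $e=uv\in X'$ the deeper endpoint $v_e$ lies on the root-to-$f(e)$ path (it is an ancestor of $f(e)$), and the shallower endpoint is an ancestor of $v_e$ and so also on that path. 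Since $T'$ is a subtree containing the root, this yields $\text{rank}(X')=|V(X')|-c(X')\leq |V(T')|-1=|E(T')|$ whenever $X'\neq\emptyset$, which is what the definition requires.

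For general (possibly disconnected) $G$, I will apply the connected case to each component $G_j$ — each living inside a single tree of $F$ — to obtain depth-decompositions $(T_j,f_j)$ of $M(G_j)$ of depth at most $h-1$. I will then merge them into a single $(T,f)$ by identifying the roots of all the $T_j$ into one new root $r^*$ whose children are precisely the former children of the individual roots. Depth is preserved, $|E(T)|=\sum_j|E(T_j)|=\sum_j\text{rank}(M(G_j))=\text{rank}(M(G))$ using that $M(G)$ is the direct sum of the $M(G_j)$, and the rank inequality transfers component-wise because every root-to-leaf path in $T$ stays inside one of the original $T_j$. The technical heart of the argument — and the only non-bookkeeping step — is the connected-case inclusion $V(X')\subseteq V(T')$; once this is in hand, the rest follows from the standard identity $\text{rank}(X')=|V(X')|-c(X')$ for graphic matroids and the fact that a subtree with $k$ vertices has $k-1$ edges.
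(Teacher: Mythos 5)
Your proof is correct. Note that the paper does not actually prove Proposition~\ref{prop:bdtd1}; it is imported from~\cite{KarKLM17} without proof, so there is no in-paper argument to compare against. What you give is the natural direct argument, and it closely parallels the paper's own proof of the matrix analogue, Proposition~\ref{prop:bdtd}: there, a rooted forest witnessing $\td_D(A)$ is turned into a depth-decomposition of the column matroid by mapping each column to a leaf below the path carrying its support, just as you map each edge $e$ to a leaf below its deeper endpoint $v_e$ and use that both endpoints lie on the root-to-$f(e)$ path. One point where your construction is sharper than a naive adaptation: Proposition~\ref{prop:bdtd} glues the trees of the forest under a \emph{new} root, which costs one extra level (giving depth $\td$ rather than $\td-1$); you instead use the tree itself in the connected case and identify the roots in the disconnected case, which is exactly what is needed to get the ``decreased by one'' bound, and your edge count $\lvert E(T)\rvert=\sum_j(\lvert V(G_j)\rvert-1)=\mathrm{rank}(M(G))$ checks out. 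The only implicit assumption worth flagging is that the optimal rooted forest may be taken to have vertex set exactly $V(G)$ (the paper's definition only requires $G\subseteq\cl(F)$); this is the standard harmless normalization, obtained by deleting any extra vertex and reattaching its children to its parent, which preserves ancestor relations and does not increase height.
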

\begin{proposition}
\label{prop:bdtd2}
For any~$2$-connected graph~$G$,
the branch-depth of the graphic matroid~$M(G)$ is at least~$\frac{1}{2}\log_2(\td(G)-1)$.
\end{proposition}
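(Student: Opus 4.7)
The plan is to reduce the proposition to a rank bound on connected matroids, which I would establish by induction on branch-depth. Specifically, I would prove the following \emph{rank lemma}: for every connected matroid $M$, $r(M) \le 4^{\bd(M)}$. Granting the lemma, the proposition follows easily: since $G$ is $2$-connected, the graphic matroid $M(G)$ is connected (in the matroid sense) and $r(M(G)) = |V(G)| - 1$; combined with the trivial bound $\td(G) \le |V(G)|$ (realized by the rooted path on $V(G)$, whose closure is $K_{|V(G)|}$), this gives $\td(G) - 1 \le r(M(G)) \le 4^{\bd(M(G))} = 2^{2\bd(M(G))}$, which on taking $\log_2$ rearranges to the claimed inequality.

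For the rank lemma I would induct on $d = \bd(M)$. The base case $d = 1$ is handled by observing that a depth-$1$ decomposition is a star and each leaf's preimage has rank at most $1$; connectedness of $M$ then forces all elements to lie under a single leaf (else two non-empty leaves would induce a separation with connectivity value $\le 0$), giving $r(M) \le 1$. For the inductive step, fix a depth-decomposition $(T, f)$ of $M$ of depth $d$. For any vertex $v$ of $T$ at depth $\delta$ with child $c$ whose subtree has $t$ edges, let $E_c$ be the set of elements of $M$ mapped to leaves in the subtree at $c$. The rank condition yields $r(E_c) \le \delta + 1 + t$ (via the union of root-to-leaf paths through $c$) and $r(E \setminus E_c) \le r(M) - 1 - t$ (via the remaining edges of $T$), whose sum is $r(M) + \delta$. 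Since $M$ is connected, the connectivity function $\lambda(A) := r(A) + r(E \setminus A) - r(M)$ satisfies $\lambda(A) \ge 1$ on every proper non-empty $A \subseteq E$.

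Applied at the root ($\delta = 0$) this forces $E_c \in \{\emptyset, E\}$ for every child $c$ of the root; combined with an edge count, it pins down that the root has exactly one child $c_1$ and that the subtree at $c_1$ has $r(M) - 1$ edges. Applied at depth $1$ to the children $c_{1,1}, \ldots, c_{1,k}$ of $c_1$, the analysis sharpens to the equality $r(E_j) + r(E \setminus E_j) = r(M) + 1$ for the induced partition $E = E_1 \sqcup \cdots \sqcup E_k$, together with $r(E_j) \le 2 + t_{1,j}$. To close the induction I would pass to the restrictions $M | E_j$ using the subtrees $T_{c_{1,j}}$; prepending suitable ancestor edges converts each subtree into a depth-decomposition of $M | E_j$ of depth at most $d - 1$. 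Each restriction $M | E_j$ may be disconnected, but its connected components individually have branch-depth at most $d - 1$ and so, by the inductive hypothesis, rank at most $4^{d-1}$. Summing over components and parts, using submodularity and the equality $\lambda(E_j) = 1$ to bound the total, yields $r(M) \le 4 \cdot 4^{d-1} = 4^d$.

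The main obstacle is the arithmetic of the final recursion step. Since the restrictions $M | E_j$ need not be connected, one must simultaneously control the number of their connected components and the branching $k$ at the grandchild level of the decomposition tree so that the factor $4 = 2 \cdot 2$ emerges cleanly. I expect it will be necessary to strengthen the inductive hypothesis --- for instance, by carrying a refined invariant that mixes the rank with a measure of connectivity deficit, or by working throughout with a ``depth-decomposition with slack'' relaxation of $\bd$ --- so that losses of depth levels and gains in branching interact predictably. The factor $4$ then decomposes naturally: one factor of $2$ accounts for the drop of a single depth level per recursion step, and the other reflects the bounded effective branching imposed by the tight connectivity constraint $\lambda = 1$ at the relevant level of~$T$.
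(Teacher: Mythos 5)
Your reduction fails at the rank lemma, which is false: connectedness of a matroid does not bound its rank in terms of its branch-depth. Take $G=K_{2,n}$ with parts $\{a,b\}$ and $\{c_1,\ldots,c_n\}$. This graph is $2$-connected, so $M(K_{2,n})$ is a connected matroid, and its rank is $n+1$. Yet its branch-depth is $2$: let $T$ be the tree with root $\rho$, a single child $u$ of $\rho$, and $n$ leaves $\ell_1,\ldots,\ell_n$ below $u$ (so $T$ has exactly $n+1$ edges, the rank), and map both edges $ac_i$ and $bc_i$ to $\ell_i$. A set $X'$ of edges meeting exactly $k$ of the vertices $c_i$ spans a subgraph with at most $k+2$ vertices and at least one component, hence has rank at most $k+1$, which is precisely the number of edges in the union of the corresponding root-to-leaf paths. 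So $r(M)\le 4^{\bd(M)}$ fails for all $n\ge 16$. This example also sits exactly on the step you flagged as the ``main obstacle'': here the branching at depth one is $k=n$ and each restriction $M|E_j$ has two components, and neither quantity is bounded by any function of the depth, so no strengthening of the inductive hypothesis can repair the recursion. The lemma you posit is genuinely stronger than the proposition --- it would bound $\td(G)$ only via the crude inequality $\td(G)\le|V(G)|$ --- and that extra strength is exactly what is unavailable: tree-depth can be tiny while the rank is huge.

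The paper does not reprove this proposition; it is quoted from~\cite{KarKLM17}, and the known argument there does not pass through rank at all. Roughly: a graph of large tree-depth contains a long path (a graph with no path on $\ell$ vertices has tree-depth less than $\ell$); in a $2$-connected graph a path with $\ell$ edges forces a cycle of length at least $2\sqrt{\ell}$ (this square root is the source of the factor $\tfrac12$); and the circuit of a cycle of length $c$ has branch-depth at least about $\log_2 c$, with branch-depth monotone under the relevant restriction. Your instinct that circuits are the extremal objects is sound --- the rank bound you want does hold for circuits --- but it is a property of circuits, not of connected matroids in general, and the correct proof must extract a circuit (long cycle) from the graph rather than argue about the whole matroid.
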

\noindent Further properties of depth-decompositions and the branch-depth of matroids can be found in~\cite{KarKLM17}.

An \emph{extended depth-decomposition} of a vector matroid~$M=(X,\II)$ is a triple $(T,f,g)$ such that
$(T,f)$ is a depth-decomposition of~$M$ and
$g$ is a bijective mapping from the non-root vertices of~$T$ to a basis of the linear hull of~$X$ that satisfies that
every element~$x\in X$ is contained in the linear hull of the~$g$-image of the non-root vertices on the path from~$f(x)$ to the root of~$T$.
Note that $g$-images need not be elements of~$M$ in general;
if all~$g$-images are elements of the matroid~$M$,
we say that an extended depth-decomposition~$(T,f,g)$ is \emph{principal}.
Kardo\v s et al.~\cite[Corollary 3.17]{KarKLM17} designed an algorithm that
outputs an approximation of an optimal depth-decomposition, which is a principal extended depth-decomposition;
we state the result here for the case of vector matroids.
\begin{theorem}
\label{thm:approx}
There exists a polynomial-time algorithm that given a vector matroid~$M$ and an integer~$d$,
either outputs that the branch-depth of~$M$ is larger than~$d$ or
outputs a principal extended depth-decomposition of~$M$ of depth at most~$4^d$.
\end{theorem}
If~$(T,f,g)$ is an extended depth-decomposition and $u$ is a vertex of~$T$,
then~$K_u$ is the linear hull of the~$g$-images of the vertices on the path from~$u$ to the root of~$T$;
in particular, if~$u$ is the root, then~$K_u$ contains the zero vector only.
It will always be clear from the context for which extended depth-decomposition of~$M$ the spaces~$K_u$ are defined,
in particular, the vertex~$u$ determines which rooted tree~$T$ is considered.

\begin{figure}
\begin{center}
\epsfbox{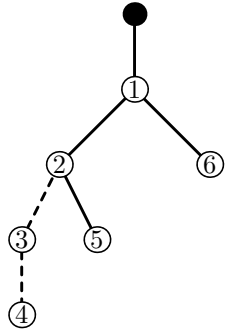}
\end{center}
\caption{An illustration of the definition of solid branches and branches at capacity.
         The picture depicts the tree~$T$ of an extended depth-decomposition~$(T,f,g)$ such that
	 the function~$g$ maps a non-root vertex labeled with~$i$ to the~$i$-th unit vector.
	 If the~$f$-preimage of the leaf of the branch of~$T$ depicted by dashed edges
	 is~$\{(1,1,0,0,0,0),(0,1,1,0,0,0),(0,0,1,0,0,0),(0,0,0,1,0,0)\}$,
	 then the branch is at capacity (regardless of the structure of the matroid and the choice of~$f$);
	 however, if the~$f$-preimage is~$\{(0,0,1,0,0,0),(0,0,0,1,0,0)\}$,
	 then the branch is not at capacity.
	 If the~$f$-preimage is~$\{(0,0,1,1,0,0),(0,1,1,0,0,0),(1,0,0,1,0,0)\}$, then the branch is solid, and
	 if the~$f$-preimage is~$\{(0,0,1,0,0,0),(1,0,1,0,0,0),(1,0,0,1,0,0)\}$, then the branch is not solid.
         }
\label{fig:branches}
\end{figure}

A \emph{branch} of a rooted tree~$T$
is a subtree~$S$ rooted at a vertex~$u$ of~$T$ such that $u$ has at least two children, and
the subtree~$S$ contains exactly~$u$, one child~$u'$ of~$u$, and all descendants of~$u'$.
In particular, a rooted tree has a branch if and only if it has a vertex with at least two children.
A branch~$S$ is \emph{primary} if every ancestor of the root of~$S$ has exactly one child.
Every rooted tree~$T$ that is not a rooted path has at least two primary branches and
all primary branches are rooted at the same vertex.
Let~$(T,f)$ be a depth-decomposition of a matroid~$M$.
We write~$\widehat{S}$ for the set of elements of the matroid~$M$ mapped by~$f$ to the leaves of~$S$ and
$\|S\|$ for the number of edges of~$S$.
Let~$S$ be a branch of~$T$ and $S_1,\ldots,S_k$ be the other branches with the same root.
The branch~$S$ is \emph{at capacity} if
\[r\left(X\setminus\left( \widehat{S}_1\cup\cdots\cup \widehat{S}_k \right) \right)=r(M)-\|S_1\|-\|S_2\|-\cdots-\|S_k\|,\]
where~$X$ is the set of all elements of the matroid~$M$ (see Figure~\ref{fig:branches} for an illustration).
Note that if~$S$ is primary,
then the left side of the equality is~$r(\widehat{S})$ and the right side is~$h+\|S\|$,
where~$h$ is the depth of the root of~$S$.
In particular, a primary branch~$S$ is \emph{at capacity} if and only if the rank of~$\widehat{S}$ is equal to the sum of~$\|S\|$ and $h$,
i.e., if and only if the rank inequality from the definition of a depth-decomposition holds with equality for the set~$\widehat{S}$.
Finally, 
if~$(T,f,g)$ is an extended depth-decomposition of~$M$,
we say that a branch~$S$ rooted at a vertex~$u$ is \emph{solid}
if the matroid~$(M/K_u)\left[\widehat{S}\right]$ after removal of its loops is connected, and that $(T,f,g)$ is solid if all of its branches are;
again, an illustration can be found in Figure~\ref{fig:branches}.

\section{Optimal extended depth-decompositions}
\label{sec:struct1}

The goal of this section is to show that every vector matroid has an extended depth-decomposition
with depth equal to its branch-depth.
To do so, we start with showing that branches rooted at the root of a decomposition tree are always at capacity.

\begin{lemma}
\label{lm:rootbranch}
Let~$(T,f)$ be a depth-decomposition of a vector matroid~$M$.
If~$T$ has a branch~$S$ rooted at the root of~$T$, then~$S$ is at capacity.
\end{lemma}

\begin{proof}
Suppose that a branch~$S$ rooted at the root of~$T$ is not at capacity.
This implies that $\dim\widehat{S}<\|S\|$.
Let~$X'$ be the set of elements of~$M$ that are not contained in~$\widehat{S}$.
By the definition of a depth-decomposition, $\dim X'$ is at most~$r(M)-\|S\|$.
However, the submodularity of the dimension implies that
$\dim\widehat{S}\cup X'<r(M)$, which is impossible.
\end{proof}

The following lemma is a core of our argument that
every matroid has a depth-decomposition of optimal depth such that
each primary branch is at capacity.
An illustration of the operation described in the statement of the lemma is given in Figure~\ref{fig:primbranch}.

\begin{lemma}
\label{lm:primbranch}
Let~$(T,f)$ be a depth-decomposition of a vector matroid~$M$.
Assume that $T$ contains a primary branch~$S$ that is not at capacity.
Let~$u$ be the root of~$S$, and
let~$T'$ be the rooted tree obtained from~$T$ by changing the root of~$S$ to be the parent of~$u$.
Then, $(T',f)$ is a depth-decomposition of~$M$.
\end{lemma}

\begin{figure}
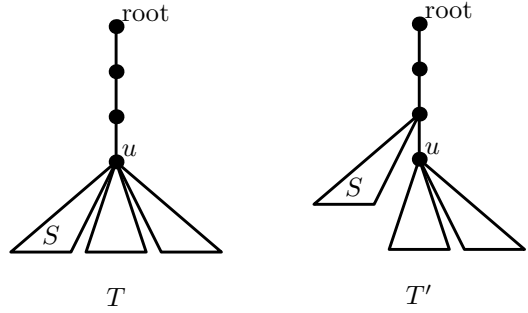

\begin{center}
\epsfbox{bdepth_ip-1.mps}
\hskip 10mm
\epsfbox{bdepth_ip-2.mps}
\end{center}
\caption{The trees~$T$ and $T'$ from the statement of Lemma~\ref{lm:primbranch}.}
\label{fig:primbranch}
\end{figure}

\begin{proof}
By Lemma~\ref{lm:rootbranch}, $u$ has a parent and thus~$T'$ is well-defined. Let~$X$ be the set of elements of~$M$ and fix a subset~$X'$ of~$X$.
We need to show that $\dim X'$ is at most the number~$e_0$ of edges on the paths in~$T'$
from the vertices in the~$f$-image of~$X'$ to the root.
If~$X'$ contains an element of~$X\setminus\widehat{S}$,
then the number of such edges is the same in the trees~$T$ and $T'$ and
the inequality follows from the fact that $(T,f)$ is a depth-decomposition of~$M$.
Hence, we will assume that $X'$ is a subset of~$\widehat{S}$.
Observe that collectively the primary branches of~$T$ different from~$S$ contain~$r(M)-h-\|S\|$ edges,
where~$h$ is the depth of~$u$.
We derive using the fact that $(T,f)$ is a depth-decomposition the following:
\begin{align*}
e_0+1+(r(M)-h-\|S\|) & \ge \dim X'\cup (X\setminus\widehat{S}) \\
                     & = \dim X'+\dim X\setminus\widehat{S} - \dim \lin{X'}\cap\lin{X\setminus\widehat{S}} \\
		     & \ge \dim X'+\dim X\setminus\widehat{S} - \dim \lin{\widehat{S}}\cap\lin{X\setminus\widehat{S}} \\
		     & = \dim X'+\dim X\setminus\widehat{S} -(\dim \widehat{S}+\dim X\setminus\widehat{S}-\dim X) \\
		     & = \dim X'-\dim \widehat{S}+r(M).
\end{align*}
This implies that $\dim X'$ is at most
\[e_0+\dim\widehat{S}+1-h-\|S\|\le e_0,\]
where the inequality follows using that $S$ is not at capacity,
i.e., $\dim\widehat{S}<h+\|S\|$.
Hence, $(T',f)$ is a depth-decomposition of~$M$.
\end{proof}

We can now show that
every matroid has a depth-decomposition of optimal depth such that
each primary branch is at capacity.
We state the next two lemmas and the theorem that follows them for an arbitrary depth-decomposition~$(T,f)$,
i.e., a depth-decomposition of not necessarily optimal depth,
since we will need to apply the algorithmic arguments used in their proofs for arbitrary depth-decompositions later.

\begin{lemma}
\label{lm:capacity}
Let~$(T,f)$ be a depth-decomposition of a vector matroid~$M=(X,\II)$ of depth~$d$.
There exists a depth-decomposition of~$M$ of depth at most~$d$ such that every primary branch is at capacity.
\end{lemma}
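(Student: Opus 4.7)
The plan is to prove the lemma by induction on the number of edges of $T$ (equivalently, on the rank of $M$). The base case of rank zero is vacuous, since $T$ has no branching and no primary branches exist.

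For the inductive step, consider a depth-decomposition $(T,f)$ of depth $d$. If $T$ is a rooted path, then there are no primary branches and $(T,f)$ itself satisfies the conclusion. Otherwise, let $u$ be the first branching vertex of $T$ at depth $p$, with primary branches $S_1,\dots,S_q$ (where $q\geq 2$) having ranks $r_i := \dim\widehat{S_i}$ and slacks $c_i := p + \|S_i\| - r_i \geq 0$. If every $c_i=0$, the given decomposition is already as required, so suppose some $c_i > 0$.

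The plan is to construct a new depth-decomposition $(T',f')$ of $M$ of depth at most $d$ whose total slack $\sum_j c_j$ at its first branching vertex is strictly smaller; iterating the construction terminates at an at-capacity decomposition after finitely many steps. To construct $(T',f')$, observe that the submatroid $M|_{\widehat{S_i}}$ has rank $r_i$ strictly less than $p + \|S_i\|$, so the subtree $T_{u'_i}$ attached to the child $u'_i$ of $u$ inside $S_i$ has more edges than are needed to witness the rank constraints on subsets of $\widehat{S_i}$. I would apply the inductive hypothesis to a suitable trimming of $(T_{u'_i},f|_{\widehat{S_i}})$, viewed as a depth-decomposition of $M|_{\widehat{S_i}}$, to obtain a new tree $\widetilde{T_i}$ of $r_i$ edges whose own primary branches are all at capacity. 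Splicing $\widetilde{T_i}$ into $T$ in place of $T_{u'_i}$, and realigning the root-to-$u$ path so that the freed edges are absorbed into the shared prefix, yields $(T',f')$.

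The main obstacle will be carrying out the splicing without simultaneously violating the depth bound $d$, the total edge count (which must equal $r$), and the rank inequalities for every subset $X'\subseteq\widehat{S_i}$ (not only $X'=\widehat{S_i}$). The depth bound is subtle because shortening the subtree for $S_i$ frees room to lengthen the root-to-$u$ path, but lengthening that path increases the depth of leaves in the other branches $S_j$ ($j\neq i$); this must be compensated either by identifying coordinated slack across branches or by simultaneously pruning several branches. I expect the cleanest resolution to come from aligning the shared prefix of the tree with (a basis of) the intersection $\bigcap_i \overline{\widehat{S_i}}$ of the linear hulls of the branches, so that the shared-path length $p$ is dictated by this intersection rather than chosen ad hoc; this alignment prefigures the extended depth-decomposition framework developed later in the paper.
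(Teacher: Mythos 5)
Your proposal correctly identifies the obstacles but does not overcome them, and the repair operation you propose (shrinking the subtree of a deficient branch and absorbing the freed edges into the shared prefix) points in the wrong direction. Being at capacity is the condition $\dim\widehat{S}=h+\|S\|$, where $h$ is the depth of the root of $S$, so the natural repair is to decrease $h$, not $\|S\|$. The paper's proof does exactly this with a purely local move: if a primary branch $S$ rooted at $u$ is not at capacity, one first shows $h\ge 1$ (if $h=0$ then $\dim\widehat{S}<\|S\|$ and submodularity of dimension forces $\dim X<r$, a contradiction), and then re-attaches $S$ to the parent of $u$, lifting the whole branch one level towards the root. This keeps the edge count and the map $f$ unchanged and only decreases leaf depths, so the depth bound $d$ is automatic; the only thing to check is the rank inequality for subsets $X'\subseteq\widehat{S}$, which follows from a short submodularity computation using that the other primary branches carry $r-h-\|S\|$ edges and that $S$ is strictly below capacity. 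Termination comes from the potential $\sum_{x\in X}(\text{depth of }f(x))$, which strictly decreases at each step; no induction on rank is needed.

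By contrast, your plan has unresolved gaps at each of the points you flag. First, the subtree $T_{u'_i}$ has $\|S_i\|-1$ edges while $\dim\widehat{S_i}$ can be anywhere up to $p+\|S_i\|$, so it need not have ``more edges than are needed,'' and the pair $(T_{u'_i},f|_{\widehat{S_i}})$ is not in general a depth-decomposition of $M|_{\widehat{S_i}}$ (its edge count need not equal the rank), so the inductive hypothesis cannot be applied to it as stated. Second, even when trimming is possible, deleting edges makes every rank inequality harder to satisfy, and you give no procedure for doing this; that trimming is essentially the entire content of the lemma. Third, lengthening the shared prefix increases the depth of every leaf in every other branch $S_j$, and your proposed fix --- aligning the prefix with a basis of $\bigcap_i\overline{\widehat{S_i}}$ --- is circular here, since the fact that these pairwise intersections coincide and have dimension $h$ (Lemma~\ref{lm:intersect}) is only proved \emph{after} all primary branches are at capacity. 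As written, the proposal is a strategy with acknowledged holes rather than a proof.
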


\begin{proof}
If~$T$ is a rooted path, then the lemma holds vacuously.
Suppose that $T$ is not a rooted path.
If all primary branches of~$(T,f)$ are at capacity, then we are done.
If not, we consider a primary branch of~$T$ that is not at capacity and
apply Lemma~\ref{lm:primbranch} to obtain a depth-decomposition~$(T',f)$.
If all primary branches of~$(T',f)$ are at capacity, then we are done.
If not, we consider a primary branch of~$T'$ that is not at capacity and iterate the process.
Note that at each iteration, the sum of the lengths of the paths from the leaves to the root decreases,
so the process eventually stops with a depth-decomposition such that all its primary branches are at capacity.
\end{proof}

Now we analyze depth-decompositions whose primary branches are at capacity.

\begin{lemma}
\label{lm:primintersect}
Let~$(T,f)$ be a depth-decomposition of a vector matroid~$M=(X,\II)$ such that 
$T$ is not a rooted path and each primary branch of~$T$ is at capacity.
Let~$S_1,\ldots,S_k$ be the primary branches of~$T$, and
let~$A_1,\ldots,A_k$ be the linear hulls of~$\widehat{S_1},\ldots,\widehat{S_k}$, respectively.
Further, let~$h$ be the depth of the common root of~$S_1,\ldots,S_k$ in~$T$.
There exists a subspace~$K$ of dimension~$h$ such that $A_i\cap A_j=K$ for all~$1\le i<j\le k$.
\end{lemma}

\begin{proof}
Consider~$i$ and $j$ such that $1\le i<j\le k$.
Since~$S_i$ is at capacity, we obtain that $\dim \widehat{S_i}=\dim A_i=h+\|S_i\|$.
Analogously, it holds that $\dim \widehat{S_j}=\dim A_j=h+\|S_j\|$.
Since~$(T,f)$ is a depth-decomposition, we deduce that
\begin{align*}
h+\|S_i\|+\|S_j\| & \geq \dim A_i\cup A_j \\
                  & = \dim A_i + \dim A_j-\dim A_i\cap A_j\\
		  & = (h+\|S_i\|)+(h+\|S_j\|)-\dim A_i\cap A_j,
\end{align*}		  
which implies that $\dim A_i\cap A_j\ge h$.
On the other hand, it holds that
\begin{align*}
\dim A_i\cap A_j & \le \dim A_i\cap\lin{\bigcup_{j'\not=i}A_{j'}} \\
                 & = \dim A_i + \dim \bigcup_{j'\not=i}A_{j'} - \dim A_i\cup\bigcup_{j'\not=i}A_{j'}\\
		 & = h+\|S_i\|+\dim \bigcup_{j'\not=i}A_{j'}-h-\sum_{j'=1}^k\|S_{j'}\|\\
		 & \le h+\|S_i\|+h+\sum_{j'\not=i}\|S_{j'}\|-h-\sum_{j'=1}^k\|S_{j'}\|=h.
\end{align*}
We conclude that $\dim A_i\cap A_j=h$.
This implies that the first inequality in the expression above holds with equality,
so
\begin{equation}
A_i\cap\lin{\bigcup_{j'\not=i}A_{j'}}
=A_i\cap A_j
=A_j\cap A_i
=A_j\cap\lin{\bigcup_{i'\not=j}A_{i'}},\label{eq:intersect}
\end{equation}
where the last step follows by the same argument above with indices~$i$ and $j$ swapped. Since the left hand side of \eqref{eq:intersect} is independent of the choice of~$j$ and the right hand side is independent of the choice of~$i$, it must hold that $A_{i}\cap A_{j}$ is independent of the choice of both~$i$ and $j$.
This intersection forms the required space~$K$ of dimension~$h$.
\end{proof}

We are now ready to prove the main theorem of this section.

\begin{theorem}
\label{thm:extended}
Let~$(T,f)$ be a depth-decomposition of a vector matroid~$M=(X,\II)$ of depth~$d$.
There exists an extended depth-decomposition of~$M$ of depth at most~$d$.
\end{theorem}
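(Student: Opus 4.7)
The plan is to induct on the number of vertices of $T$, invoking Lemma~\ref{lm:capacity} at the start of each inductive step to assume, without loss of generality, that every primary branch of $(T,f)$ is at capacity.  In the base case, $T$ is a rooted path with $r$ edges; then all of $X$ maps to the unique leaf and $\dim\overline{X}=r$, so any bijection from the $r$ non-root vertices to a basis of $\overline{X}$ yields the required extended depth-decomposition.

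For the inductive step, let $S_1,\ldots,S_k$ be the primary branches with common root $v$ at depth $h$, set $A_i=\overline{\widehat{S_i}}$, and invoke Lemma~\ref{lm:intersect} to obtain a subspace $K$ of dimension $h$ with $A_i\cap A_j=K\subseteq A_i$ for all $i\neq j$.  For each $i$ I fix a complement $B_i$ of $K$ inside $A_i$, so that $A_i=K\oplus B_i$.  Since $B_i\cap A_j\subseteq A_i\cap A_j=K$ and $B_i\cap K=\{0\}$, we get $B_i\cap A_j=\{0\}$ whenever $i\neq j$, and a dimension count using $h+\sum_i\|S_i\|=r=\dim\overline{X}$ yields the internal direct sum $\overline{X}=K\oplus B_1\oplus\cdots\oplus B_k$.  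Let $M_i$ be the quotient matroid whose ground set consists of the projections of $\widehat{S_i}$ onto $B_i$ along $K$ (equivalently, the images of $\widehat{S_i}$ in $A_i/K$).  I claim that $(S_i,f|_{\widehat{S_i}})$ is a depth-decomposition of $M_i$: its rank equals $\dim A_i-h=\|S_i\|$, matching the number of edges of $S_i$, and for any $X'\subseteq\widehat{S_i}$ the rank of $X'$ in $M_i$ equals $\dim(\overline{X'}+K)-h$.  The identity $B_j\cap(\overline{X'}+K)=\{0\}$ (same argument as above, using $\overline{X'}\subseteq A_i$) gives $\dim\overline{X'\cup\widehat{S_j}}=\dim(\overline{X'}+K)+\|S_j\|$ for any $j\neq i$, and the original depth-decomposition inequality applied to $X'\cup\widehat{S_j}$ (whose leaf-to-root paths cover the $h$ edges above $v$, the $e$ edges reaching the leaves of $X'$ inside $S_i$, and all $\|S_j\|$ edges of $S_j$ by the at-capacity hypothesis on $S_j$) bounds the left-hand side by $h+e+\|S_j\|$, yielding $\dim(\overline{X'}+K)-h\le e$ as required.

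The inductive hypothesis applied to each $(S_i,f|_{\widehat{S_i}})$, viewed as a depth-decomposition of $M_i$ of depth at most $d-h$, supplies extended depth-decompositions $(S_i,f|_{\widehat{S_i}},g_i)$; after lifting each $g_i$-value to its unique representative in $B_i$, I define $g$ by placing an arbitrary basis of $K$ on the $h$ non-root vertices of $T$ on the path from the root of $T$ to $v$ (including $v$ itself) and using the lifted $g_i$ on the non-root vertices of $S_i$.  The direct sum $\overline{X}=K\oplus B_1\oplus\cdots\oplus B_k$ guarantees that the image of $g$ is a basis of $\overline{X}$, and for any $x\in\widehat{S_i}$ the inductive hypothesis expresses the projection of $x$ onto $B_i$ as a combination of $g_i$-values along the path from $f(x)$ to $v$, so that $x$ itself differs from this combination by an element of $K$, which in turn lies in the span of the basis of $K$ placed above $v$.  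The main obstacle is verifying that $(S_i,f|_{\widehat{S_i}})$ is a depth-decomposition of $M_i$, as this is where the at-capacity hypothesis (via the sibling identity $B_j\cap A_i=\{0\}$) becomes essential; the remaining pieces are bookkeeping with the direct sum decomposition and the inductive hypothesis.
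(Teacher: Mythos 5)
Your proposal is correct and follows essentially the same route as the paper's proof: reduce to the at-capacity case via Lemma~\ref{lm:capacity}, extract the common intersection $K$ via Lemma~\ref{lm:intersect}, contract $K$ to form the matroids $M_i$ on the primary branches, verify that each $(S_i,f|_{\widehat{S_i}})$ is a depth-decomposition of $M_i$ by the same inequality comparing $X'$ with a sibling branch $\widehat{S_j}$, and glue the inductively obtained decompositions back together with a basis of $K$ on the top path. Your use of explicit complements $B_i$ and the direct sum $\overline{X}=K\oplus B_1\oplus\cdots\oplus B_k$ is just the concrete form of the paper's quotient-space construction, and if anything makes the final verification that $\image(g)$ is a basis slightly more transparent.
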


\begin{proof}
The proof proceeds by induction on the rank of~$M$.
If~$T$ is a rooted path,
we assign elements of a basis of~$\lin{X}$ to the non-root vertices of~$T$ arbitrarily,
i.e., we choose~$g$ to be any bijection to a basis of~$\lin{X}$,
which yields an extended depth-decomposition~$(T,f,g)$ of~$M$.
Note that if the rank of~$M$ is one, then~$T$ is the one-edge rooted path,
i.e., this case covers the base of the induction in particular.

We assume that $T$ is not a rooted path for the rest of the proof.
By Lemma~\ref{lm:capacity}, we can assume that all primary branches of~$T$ are at capacity.
Let~$S_1,\ldots,S_k$ be the primary branches of~$T$, and
let~$h\ge 0$ be the depth of the common root of~$S_1,\ldots,S_k$.
By Lemma~\ref{lm:primintersect}, there exists a subspace~$K$ of dimension~$h$ such that
the intersection of linear hulls of~$\widehat{S_i}$ and $\widehat{S_j}$ is~$K$ for all~$1\le i<j\le k$;
let~$b_1,\ldots,b_h$ be an arbitrary basis of~$K$.

We define~$M_i$, $i=1,\ldots,k$, to be the matroid such that
the elements of~$M_i$ are~$\widehat{S_i}$ and $X'\subseteq\widehat{S_i}$ is independent
if and only if the elements~$X'\cup\{b_1,\ldots,b_h\}$ are linearly independent.
In particular, the rank of~$X'\subseteq\widehat{S_i}$ in~$M_i$ is equal to~$\dim X'\cup K-h$.
The matroid~$M_i$ can be viewed as obtained by taking the vector matroid with the elements~$\widehat{S_i}\cup\{b_1,\ldots,b_h\}$ and
contracting the elements~$b_1,\ldots,b_h$. In particular, $M_i$ is a vector matroid, and
the vector representation of~$M_i$ can be obtained from~$\widehat{S_i}$ by taking quotients by~$K$.
Note that the rank of~$M_i$ is~$\dim\widehat{S_i}\cup K-h$,
i.e., its rank is smaller than the rank of~$M$ and we will be able to eventually apply induction to it.

Let~$f_i$ be the restriction of~$f$ to~$\widehat{S_i}$.
We claim that $(S_i,f_i)$ is a depth-decomposition of~$M_i$.
Let~$X'$ be a subset of~$\widehat{S_i}$, and
let~$e_i$ be the number of edges contained in the union of paths from the elements~$f_i(x)$, $x\in X'$, to the root of~$S_i$.
By the definition of~$M_i$, the rank of~$X'$ in~$M_i$ is equal to~$\dim X'\cup K-h$.
Choose an arbitrary~$j\not=i$, $1\le j\le k$.
Since the intersection of linear hulls of~$\widehat{S_i}$ and $\widehat{S_j}$ is~$K$,
$(T,f)$ is a depth-decomposition of~$M$, and
the branch~$S_j$ is at capacity, i.e., $\dim\widehat{S_j}=\|S_j\|+h$,
we obtain that the rank of~$X'$ in~$M_i$ is equal to
\begin{align*}
\dim X'\cup K-h & = \dim X'\cup\widehat{S_j}-\dim\widehat{S_j} \\
 & \le e_i+\|S_j\|+h-\dim\widehat{S_j}=e_i.
\end{align*}
Hence, $(S_i,f_i)$ is a depth-decomposition of~$M_i$.

We apply induction to each matroid~$M_i$ and its depth-decomposition~$(S_i,f_i)$, $i=1,\ldots,k$,
to obtain extended depth-decompositions~$(S'_i,f'_i,g_i)$ of~$M_i$ such that the depth of~$S'_i$ is at most the depth of~$S_i$.
Let~$T'$ be a rooted tree obtained from a rooted path of length~$h$
by identifying its non-root end with the roots of~$S'_1,\ldots,S'_k$.
Observe that the depth of~$T'$ does not exceed the depth of~$T$.
Further, let~$f'$ be the unique function from~$X$ to the leaves of~$T$ such that
the restriction of~$f'$ to the elements of~$M_i$ is~$f'_i$.
Finally, let~$g$ be any function from the non-root vertices of~$T$ such that 
the~$h$ non-root vertices of the path from the root are mapped by~$g$ to the vectors~$b_1,\ldots,b_h$ by~$g$ and
$g(v)=g_i(v)$ for every non-root vertex~$v$ of~$S_i$.

We claim that $(T',f',g)$ is an extended depth-decomposition of~$M$.
We first verify that, for every~$x\in X$, $f'(x)$ is contained in the linear hull of the~$g$-image
of the non-root vertices on the path from~$f'(x)$ to the root.
Fix~$x\in X$ and let~$i$ be such that $x\in\widehat{S_i}$.
Since~$(S'_i,f'_i,g_i)$ is an extended depth-decomposition of~$M_i$,
$x$ is contained in the linear hull of~$K$ and
the~$g_i$-images of the non-root vertices on the path from~$f'(x)=f_i(x)$ to the root of~$S'_i$.
Hence, $x$ is contained 
in the linear hull of the~$g$-image of the non-root vertices on the path from~$f'(x)$ to the root of~$T'$.

Consider now an arbitrary subset~$X'\subseteq X$.
We have already established that
all elements of~$X'$ are contained in the linear hull of the~$g$-image of the non-root vertices
on the paths from~$f'(x)$, $x\in X'$, to the root of~$T'$.
Since the dimension of this linear hull is at most the number of non-root vertices on such paths,
which is equal to the number of edges on the paths, it follows that $(T',f')$ is a depth-decomposition of~$M$.
\end{proof}

\section{Optimal tree-depth of a matrix}
\label{sec:bdtd}

In this section, we relate the optimal dual tree-depth of a matrix~$A$ to its branch-depth.
We start with showing that the branch-depth of a matrix~$A$ is at most its dual tree-depth.

\begin{proposition}
\label{prop:bdtd}
If~$A$ is an~$m\times n$ matrix, then~$\bd(A)\le\td_D(A)$.
\end{proposition}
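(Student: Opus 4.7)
The plan is to construct a depth-decomposition $(T,f)$ of the column matroid $M$ of $A$ whose depth is at most $d:=\td_D(A)$. Starting from a rooted forest $F$ of height $d$ with $\cl(F)\supseteq G_D(A)$, the structural observation is that for every column $c_j$, the set of rows on which $c_j$ is nonzero forms a clique in $\cl(F)$ and therefore lies on a single root-to-leaf chain of $F$.

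First I would reduce to the case where $A$ has full row rank. If some row of $A$ is a linear combination of the others, deleting it preserves every linear dependency among the columns, so the column matroid $M$ is unchanged; meanwhile the new dual graph is an induced subgraph of $G_D(A)$, so its tree-depth can only decrease. Iterating, one may assume $m=r$ where $r:=\mathrm{rank}(A)$. This reduction is necessary because the definition of a depth-decomposition forces $|E(T)|$ to equal the rank of $M$.

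Given such an $A$, define $T$ by attaching a new virtual root $\rho$ above $F$ as the parent of every root of $F$. Then $T$ has exactly $m=r$ edges, and its depth equals the height of $F$, namely $d$. For each column $c_j$ with support $S_j$, let $v_j$ be the deepest element of $S_j$ in $F$ (or pick $v_j$ arbitrarily if $S_j=\emptyset$), and set $f(c_j)$ to be any leaf of $T$ that descends from $v_j$. Since $S_j$ is a chain in $F$ with deepest vertex $v_j$, every element of $S_j$ lies on the path in $T$ from $\rho$ to $f(c_j)$.

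To verify the rank condition, fix $X'\subseteq X$ and let $U$ be the union of the paths in $T$ from $\rho$ to the leaves $f(c_j)$ over $c_j\in X'$. The non-root vertices $V_U$ of $U$ contain the support $S_j$ of every $c_j\in X'$, so the columns of $X'$ are supported on the rows $V_U$; this yields $\dim X'\le |V_U|=|E(U)|$, establishing the rank inequality. The main obstacle to watch out for is exactly the edge-count constraint $|E(T)|=r$: the naive construction $T=F+\rho$ has $m$ edges rather than $r$, which is precisely why reducing to full row rank at the outset is essential.
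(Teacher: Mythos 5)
Your proposal is correct and follows essentially the same route as the paper's proof: reduce to full row rank, attach a new root above the forest witnessing $\td_D(A)$ so that the edge count equals the rank, map each column to a leaf below the chain containing its support, and verify the rank inequality by counting the rows covered by the relevant root-to-leaf paths. No gaps.
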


\begin{proof}
We assume without loss of generality that the rows of the matrix~$A$ are linearly independent.
Indeed, deleting a row of~$A$ that can be expressed as a linear combination of other rows of~$A$
does not change the structure of the matroid formed by the columns of~$A$. In particular,
the branch-depth of~$A$ is preserved by deleting such a row, and
the deletion cannot increase the tree-depth of the dual graph~$G_D(A)$ (the dual graph of the new matrix
is a subgraph of the original dual graph and the tree-depth is monotone under taking subgraphs).

Let~$X$ be the set of rows of the matrix~$A$ and $Y$ the set of its columns.
Further,
let~$T$ be a rooted forest of height~$\td_D(A)$ with the vertex set~$X$ such that
its closure contains the dual graph~$G_D(A)$ as a subgraph.
Consider the rooted tree~$T'$ obtained from~$T$ by adding a new vertex~$w$,
making~$w$ adjacent to the roots of all trees in~$T$ and also making~$w$ to be the root of~$T'$.
Since the rows of~$A$ are linearly independent, the number of edges of~$T'$ is equal to the row rank of~$A$,
which is the same as its column rank.
In particular, the number of edges of~$T'$ is the rank of the vector matroid formed by the columns of~$A$.

We next define a function~$f:Y\to V(T')$ such that
the pair~$(T',f)$ is a depth-decomposition of the vector matroid formed by the columns of~$A$. Let~$y$ be a column of~$A$, and observe that all rows~$x$ such that the entry in the row~$x$ and the column~$y$ is non-zero
form a complete subgraph of the dual graph~$G_D(A)$.
Hence they must lie on some path from a leaf to the root of~$T'$;
set~$f(y)$ to be any such leaf.

Since the depth of~$T'$ is~$\td_D(A)$ (the height of~$T'$ is~$\td_D(A)+1$),
the proof will be completed by showing that $(T',f)$ is a depth-decomposition of
the vector matroid formed by the columns of~$A$.
Consider a subset~$Y'\subseteq Y$ of columns of~$A$ and
let~$X'$ be the set of rows (vertices of the dual graph) on the path from~$f(y)$ for some~$y\in Y'$ to the root of~$T'$.
Note that $|X'|$ is equal to the number of edges contained in such paths.
The definition of~$f$  yields that
every column~$y\in Y'$ has non-zero entries only in the rows~$x$ such that $x\in X'$.
Hence, the rank of~$Y'$ is at most~$|X'|$.
It follows that the pair~$(T',f)$ is a depth-decomposition of the vector matroid formed by the columns of~$A$.
\end{proof}

We next prove the main theorem of this section.

\begin{theorem}
\label{thm:bdbasis}
Let~$A$ be an~$m\times n$ matrix of rank~$m$,
let~$M$ be the vector matroid formed by columns of~$A$, and
let~$(T,f,g)$ be an extended depth-decomposition of~$M$.
Further, let~$\image(g)=\{w_1,\ldots,w_m\}$.
The dual tree-depth of the~$m\times n$ matrix~$A'$ such that
the~$j$-th column of~$A$ is equal to
\[\sum_{i=1}^m A'_{ij}w_i\]
is at most the depth of the tree~$T$.
\end{theorem}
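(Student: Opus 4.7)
\medskip

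\noindent\textbf{Proof plan.}
The plan is to read off a rooted forest witnessing small tree-depth of $G_D(A')$ directly from the extended depth-decomposition $(T,f,g)$. Since the tree $T$ has exactly $\text{rank}(M) = m$ edges and hence $m$ non-root vertices, and since $\image(g)=\{w_1,\ldots,w_m\}$ has $m$ elements, the map $g$ is a bijection between the non-root vertices of $T$ and the basis $\{w_1,\ldots,w_m\}$. Let $v_i$ denote the unique non-root vertex with $g(v_i)=w_i$; we will identify row $i$ of $A'$ with the vertex $v_i$ of $T$.

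The key observation I would establish first is a sparsity statement for $A'$: namely, for each column index $j$ and row index $i$, if $A'_{ij}\neq 0$ then $v_i$ lies on the path in $T$ from $f(x_j)$ to the root, where $x_j$ denotes the $j$-th column of $A$. This is immediate from the defining property of an extended depth-decomposition: $x_j$ lies in the linear hull of the $g$-images of the non-root vertices on that path, and because $w_1,\ldots,w_m$ is a basis the expansion $x_j = \sum_i A'_{ij} w_i$ is unique, so only those $w_i$ indexed by vertices on the path can have nonzero coefficient.

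Let $F$ be the rooted forest obtained by deleting the root of $T$; the new roots are the former children of the root. Since the longest root-to-leaf path of $T$ contains $(\mathrm{depth}\,T)+1$ vertices and loses the root in $F$, the height of $F$ (max vertices on a root-to-leaf path) is at most $\mathrm{depth}\,T$. I then claim $G_D(A')\subseteq\cl(F)$ under the identification $i\leftrightarrow v_i$. Indeed, if rows $i$ and $i'$ are adjacent in $G_D(A')$, there is a column $j$ with $A'_{ij}\neq 0$ and $A'_{i'j}\neq 0$; by the sparsity observation, both $v_i$ and $v_{i'}$ lie on the root-to-$f(x_j)$ path of $T$. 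This path is a chain in the ancestor order, so $v_i$ and $v_{i'}$ are comparable, and in particular lie in the same tree of $F$ (the subtree rooted at the child of the root containing $f(x_j)$) with one being the ancestor of the other; hence $\{v_i,v_{i'}\}$ is an edge of $\cl(F)$. Combining the bound on the height of $F$ with the definition of tree-depth gives $\td_D(A') = \td(G_D(A')) \le \mathrm{depth}\,T$, as required.

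There is no really hard step here; the main thing to get right is the bookkeeping between ``depth'' and ``height'' when removing the root of $T$ to form $F$, and the observation that the sparsity pattern of the coordinate matrix $A'$ is dictated exactly by the root-to-leaf paths of $T$ via the uniqueness of expansion in the basis $\{w_1,\ldots,w_m\}$. Once those two points are stated cleanly, the containment $G_D(A')\subseteq\cl(F)$ and the final inequality follow with essentially no further calculation.
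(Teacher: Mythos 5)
Your proof is correct and follows essentially the same route as the paper's: delete the root of $T$ to obtain a forest $F$ of height $\mathrm{depth}(T)$, observe via uniqueness of the expansion in the basis $\{w_1,\ldots,w_m\}$ that the support of each column of $A'$ is confined to a root-to-leaf path of $T$, and conclude $G_D(A')\subseteq\cl(F)$. Your explicit statement of the sparsity observation and of the bijectivity of $g$ only makes the same argument slightly more detailed.
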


\begin{proof}
Let~$F$ be the rooted forest obtained from~$T$ by removing the root and
associate the~$i$-th row of~$A'$ with the vertex~$v$ of~$F$ such that $g(v)=w_i$.
Note that the height of~$F$ is the depth of~$T$.
We will establish that the dual graph~$G_D(A')$ is contained in the closure~$\cl(F)$ of the forest~$F$.
Let~$i$ and $i'$, $1\le i<i'\le m$, be such that
the vertices of~$F$ associated with the~$i$-th and $i'$-th rows of~$A'$ are adjacent in~$G_D(A')$.
This means that there exists~$j$, $1\le j\le n$, such that $A'_{ij}\not=0$ and $A'_{i'j}\not=0$.
Let~$v$ be the leaf of~$T$ that is the~$f$-image of the~$j$-th column of~$A$.
The definition of an extended depth-decomposition yields that
the~$j$-th column is a linear combination of the~$g$-image of the non-root vertices on the path from~$v$ to the root of~$T$.
In particular, the path contains the two vertices of~$T$ mapped by~$g$ to~$w_i$ and $w_{i'}$;
these two vertices are associated with the~$i$-th and $i'$-th rows of~$A'$.
Hence, the vertices associated with the~$i$-th and $i'$-th rows are adjacent in~$\cl(F)$.
We conclude that $G_D(A')$ is a subgraph of~$\cl(F)$.
\end{proof}

We are now ready to prove Theorem~\ref{thm:equal}.

\begin{proof}[Proof of Theorem~\ref{thm:equal}]
Let~$\td_D^*(A)$ be the smallest dual tree-depth of a matrix that is row-equivalent to~$A$.
By Proposition~\ref{prop:bdtd}, it holds that $\bd(A)\le\td_D^*(A)$.
We now prove the other inequality.
We can assume without loss of generality that the rank of~$A$ is equal to the number of its rows;
if this is not the case,
we can apply the following arguments to the matrix~$A$ restricted to a maximal linearly independent set of rows and
then use row operations to make all entries of the remaining rows to be equal to zero.
Since rows with all entries equal to zero correspond to isolated vertices in the dual graph,
their presence does not affect the dual tree-depth of the matrix.

Let~$M$ be the vector matroid formed by the columns of~$A$.
By Theorem~\ref{thm:extended}, the matroid~$M$ has an extended depth-decomposition~$(T,f,g)$ with depth~$\bd(A)=\bd(M)$.
Let~$A'$ be the matrix from the statement of Theorem~\ref{thm:bdbasis}.
Note that $A'=B^{-1}A$
where~$B$ is the~$m\times m$ matrix such that
the columns of~$B$ are the vectors~$w_1,\ldots,w_m$ from the statement of Theorem~\ref{thm:bdbasis}.
In particular, $A'$ is row-equivalent to~$A$ since~$\{w_1,\ldots,w_m\}$ is a basis.
As~$\td_D(A')$ is at most the depth of~$T$, which is~$\bd(A)$, it follows that $\td_D^*(A)\le\bd(A)$.
\end{proof}

\section{Algorithms for integer programming}
\label{sec:param}

The main purpose of this section is to combine Theorem~\ref{thm:equal}
with the existing approximation algorithm for branch-depth (Theorem~\ref{thm:approx})
to obtain an approximation algorithm for computing a row-equivalent matrix with small dual tree-depth (if it exists).

\begin{proof}[Proof of Theorem~\ref{thm:alg1}]
Let~$A$ be an~$m\times n$ matrix.
Without loss of generality, we can assume that the rows of~$A$ are linearly independent,
i.e., the rank of~$A$ is~$m$.
This also implies that the rank of the column space of~$A$ is~$m$, in particular, $n\ge m$.
We apply the approximation algorithm described in Theorem~\ref{thm:approx}
to the vector matroid~$M$ formed by the columns of the matrix~$A$, and
obtain an extended depth-decomposition~$(T,f,g)$ of~$M$.
If the depth of~$T$ is larger than~$4^d$, then the branch-depth of~$A$ is larger than~$d$;
we report this and stop.
Let~$B_g$ be the matrix with the columns formed by the vectors in~$\image(g)$ and let~$B=B_g^{-1}$.
Note that the matrix~$A'$ from the statement of Theorem~\ref{thm:bdbasis} is equal to~$BA$.
By Theorem~\ref{thm:bdbasis}, the dual tree-depth of~$A'$ is at most~$4^d$.

We will next show that the entry complexity of~$A'$ is at most~$O(d\cdot 4^d\cdot\ec(A))$.
Note that the classical implementation of the Gaussian elimination in strongly polynomial time by Edmonds~\cite{Edm67}
yields that the entry complexity of the matrix~$B$ is~$O(m\log m\cdot \ec(A))$ and
this estimate is not sufficient to bound the entry complexity of~$A'$ in the way that we need.
Let~$x$ be a column of~$A$, and
let~$W$ be the set of indices~$i$, $1\le i\le m$, such that
the~$i$-th column of~$B_g$ is~$g(v)$ for some non-root vertex~$v$ on the path from~$f(x)$ to the root of~$T$.
Note that $|W|\le 4^d$ since the depth~$T$ is at most~$4^d$.
Since the column~$x$ is a linear combination of the~$g$-images of non-root vertices on the path from~$f(x)$ to the root of~$T$,
the~$i$-th entry of the column of~$A'$ that corresponds to~$x$ is zero if~$i\not\in W$.
The remaining~$|W|$ entries of this column of~$A'$ form a solution of the following system of at most~$4^d$ linear equations:
the system is given by a matrix obtained from~$B_g$ by restricting~$B_g$ to the columns with indices in~$W$ and
to~$|W|$ rows such that the resulting matrix has rank~$|W|$, and
the right hand side of the system is formed by the entries of the column~$x$ in~$A$ corresponding to these~$|W|$ rows.
It follows (using Cramer's rule for solving systems of linear equations involving determinants) that
a solution of this system has entry complexity at most~$O(\log (4^d)!\cdot\ec(A))=O(d\cdot 4^d\cdot\ec(A))$.
Hence, the entry complexity of the matrix~$A'=BA$,
after dividing the numerator and the denominator of each entry by their greatest common divisor,
is~$O(d\cdot 4^d\cdot\ec(A))$.
\end{proof}

As explained in Section~\ref{sec:intro}, Theorem~\ref{thm:alg1} yields Corollary~\ref{cor:ip},
which asserts that integer programming is fixed parameter tractable
when parameterized by the branch-depth and the entry complexity of the constraint matrix.

\begin{proof}[Proof of Corollary~\ref{cor:ip} using Theorem~\ref{thm:alg1}]
Consider an integer program as in \eqref{IP} that has branch-depth at most~$d$.
We apply the algorithm from Theorem~\ref{thm:alg1}
to obtain a rational matrix~$B$ such that
the instance with~$A'=BA$, $\veb'=B\veb$, $\vel'=\vel$ and $\veu'=\veu$
has dual tree-depth~$D$ at most~$4^d$ in case of Theorem~\ref{thm:alg1}.

To apply the algorithm from~\cite{EisHKKLO19}, we need to transform the matrix~$A'$ into an integer matrix.
We do so by multiplying each row by the least common multiple of the denominators of the fractions in this row.
Since all denominators are integers between~$1$ and $2^{\ec(A')}$,
the value of this least common multiple is at most~$2^{\ec(A') 2^{\ec(A')}}$.
Hence, the entry complexity of the resulting integer matrix~$A''$ is~$O\left(\ec(A')2^{\ec(A')}\right)$.
Since the dependence of the algorithm from~\cite{EisHKKLO19}
on the entry complexity~$\ec(A'')$ and the dual tree-depth~$D$ of~$A''$ is~$2^{\left(\ec(A'')+D\right)D 2^{D}}$,
we obtain that the running time of the resulting algorithm depends on~$\ec(A)$ and $d$ as given in Table~\ref{tab:complexity}.
\end{proof}

We complement Corollary~\ref{cor:ip} by showing that
integer programming is not fixed parameter tractable when parameterized by the ``primal'' branch-depth.

\begin{proposition}
\label{prop:bdp}
Integer programming is \NPh for instances with constraint matrices~$A$ satisfying~$\bd(A^{T})=1$ and $\ec(A)=1$,
i.e., for instances such that the vector matroid formed by rows of the constraint matrix has branch-depth one.
\end{proposition}

\begin{proof}
An integer program as in \eqref{IP} such that the rows of the matrix~$A$ are not linearly independent
is equivalent to an integer program with a matrix~$A'$ obtained from~$A$
by a restriction to a maximal linearly independent set of rows
unless the rank of the matrix~$A$ with the column~$\veb$ added is larger than the rank of~$A$;
in the latter case, the integer program is infeasible.
Hence, it is possible in polynomial time to either determine that the input integer program
is infeasible or to find an equivalent integer program such that the rows of the constraint matrix are
linearly independent and the matrix is a submatrix of the original constraint matrix.
However, the branch-depth of the matroid formed by rows of such a (non-zero) matrix is one.
Since integer programming is already \NPh for instances such that all the entries of
the constraint matrix are~$0$ or~$\pm 1$, cf.~\cite[Proposition 101, part 2]{EisHKKLO19},
the proposition follows.
\end{proof}

\section{Structure of extended depth-decompositions}
\label{sec:struct2}

In this section, we present structural results on extended depth-decompositions that
we need to design a fixed parameter algorithm to compute a depth-decompo\-si\-tion of a vector matroid with an optimal depth.
We start with the following lemma, which can be viewed as a generalization of Lemma~\ref{lm:primintersect};
indeed, if the set~$U$ in the statement contains only the common root of the primary branches,
then the statement of the lemma is the same as that of Lemma~\ref{lm:primintersect}.

\begin{lemma}
\label{lm:allintersect}
Let~$(T,f)$ be a depth-decomposition of a vector matroid~$M$ and
let~$U$ be a set of vertices of~$T$ such that
every vertex contained in~$U$ has at least two children and
every ancestor~$u'$ of a vertex in~$U$ such that $u'$ has at least two children is also contained in~$U$.
Assume that every branch of~$T$ rooted at a vertex from~$U$ is at capacity.

Then, every vertex~$u\in U$ can be associated with a subspace~$L_u$ of the linear hull of the elements of~$M$ such that
the dimension of~$L_u$ is the depth of~$u$ and the following holds.
Let~$S_1,\ldots,S_k$ be all branches rooted at~$u$.
If each ancestor of~$u$ has a single child, let~$L_0$ be the vector space containing the zero vector only;
otherwise, let~$u'$ be the nearest ancestor of~$u$ with at least two children, and
let~$L_0$ be the space~$L_{u'}$.
It holds that
\[\dim\widehat{S_i}\cup L_0=\|S_i\|+\dim L_u\qquad\mbox{and}\qquad
  \lin{\widehat{S_i}\cup L_0}\cap\lin{\widehat{S_j}\cup L_0}=L_u\]
for all~$1\leq i < j \leq k$. In particular, $L_0\subseteq L_u$.
\end{lemma}

\begin{proof}
We proceed by induction on the size of~$U$.
If~$U$ is empty, the lemma vacuously holds.
Suppose that $\lvert U\rvert=1$ and let~$u$ be the only vertex contained in~$U$.
By the assumption of the lemma, every branch rooted at~$u$ is primary.
Hence, the statement of the lemma is implied by Lemma~\ref{lm:primintersect} and
the fact that each branch rooted at~$u$ is at capacity.

\begin{figure}
\begin{center}
\epsfbox{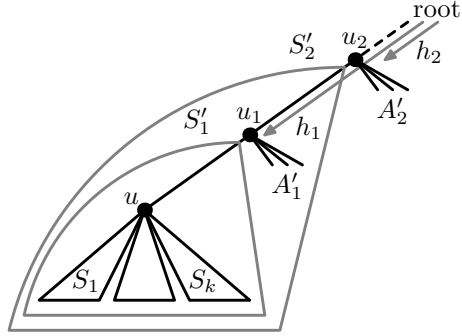}
\end{center}
\caption{Notation used in the proof of Lemma~\ref{lm:allintersect}.}
\label{fig:allintersect}
\end{figure}

Suppose that $\lvert U\rvert\ge 2$, and let~$u$ be any vertex of~$U$ with no descendant in~$U$.
Apply induction to the set~$U\setminus\{u\}$ to get subspaces~$L_{u'}$, $u'\in U\setminus\{u\}$,
with properties given in the statement of the lemma.
Let~$S_1,\ldots,S_k$ be all branches rooted at~$u$, and
let~$A_1,\ldots,A_k$ be the linear hulls of~$\widehat{S_1},\ldots,\widehat{S_k}$, respectively.
Further, let~$u_1,\ldots,u_{\ell}$ be all the vertices with at least two children on the path from the parent of~$u$ to the root (in this order), and
let~$h_1,\ldots,h_{\ell}$, be the depth of~$u_1,\ldots,u_{\ell}$, respectively.
See Figure~\ref{fig:allintersect} for an illustration of the notation.
Note that $\{u_1,\ldots,u_{\ell}\}\subseteq U$;
the choice of~$u$ implies that $\ell\ge 1$.
Let~$S'_i$, $i=1,\ldots,\ell$, be the branch rooted at~$u_i$ that contains~$u$ and
$A'_i$ be the linear hull of the elements assigned to leaves of the branches rooted at~$u_i$ different from~$S'_i$.
Note that $\widehat{S'_1}=\widehat{S_1}\cup\cdots\cup\widehat{S_k}$.

Set~$L_i=L_{u_i}$ for~$i=1,\ldots,\ell$ and also set~$L_{\ell+1}$ to be the vector space containing the zero vector only.
Note that $L_{\ell+1}\subseteq L_\ell\subseteq\cdots\subseteq L_1$ and
the dimension of~$L_i$ is~$h_i$ for~$i=1,\ldots,\ell$;
the space~$L_1$ is the vector space~$L_0$ from the statement of the lemma with respect to~$u$.
The following holds by the induction assumption for every~$i=1,\ldots,\ell$:
if~$R$ is a branch rooted at the vertex~$u_i$, then
\begin{equation}
\dim \widehat{R}\cup L_{i+1}=\dim L_i+\|R\|\label{eq:allintersect-ind-dim}
\end{equation}
and if~$R'$ is another branch rooted at the vertex~$u_i$, then
\begin{equation}
\lin{\widehat{R}\cup L_{i+1}}\cap\lin{\widehat{R'}\cup L_{i+1}}=L_i.\label{eq:allintersect-ind}
\end{equation}
The identity \eqref{eq:allintersect-ind-dim} for~$i=1$ and $R=S'_1$ yields that
\begin{equation}
\dim L_2\cup\widehat{S'_1}=\dim L_1+\|S'_1\|.\tag{\ref{eq:allintersect-ind-dim}.1}\label{eq:allintersect1D}
\end{equation}
Using \eqref{eq:allintersect-ind} iteratively for~$i=\ell,\ldots,1$,
we obtain the following; the iterative argument uses that $\widehat{S'_i}\cup A'_i\subseteq\lin{\widehat{S'_{i+1}}}$.
\begin{align}
L_i\subseteq \lin{A'_i\cup L_{i+1}}\subseteq \dots \subseteq \lin{A'_i\cup\cdots\cup A'_{\ell}} \label{eq:allintersect0A}\\
\lin{\widehat{S'_i}\cup L_{i+1}}\cap\lin{A'_i\cup L_{i+1}}=L_i \nonumber\\
\lin{\widehat{S'_i}\cup L_{i+1}}\cap\lin{A'_i\cup\cdots\cup A'_{\ell}}=L_i \label{eq:allintersect0B}
\end{align}
Let~$A'$ be the linear hull of~$A'_i\cup\cdots\cup A'_{\ell}$. The relations \eqref{eq:allintersect0A} and \eqref{eq:allintersect0B} yield for~$i=1$ the following.
\begin{align}
L_1\subseteq \lin{A'_1\cup\cdots\cup A'_\ell}=A' \tag{\ref{eq:allintersect0A}.1}\label{eq:allintersect1A}\\
\lin{\widehat{S'_1}}\cap A'\subseteq L_1 \tag{\ref{eq:allintersect0B}.1}\label{eq:allintersect1B}
\end{align}
Now we obtain using \eqref{eq:allintersect0B} that
\begin{equation}
\dim A'_i\cup\cdots\cup A'_{\ell} = r(M)-\|S'_i\|. \label{eq:allintersect0C}
\end{equation}
This implies for~$i=1$ that
\begin{equation}
\dim A'=r(M)-\|S'_1\|. \tag{\ref{eq:allintersect0C}.1}\label{eq:allintersect1C}
\end{equation}

The lemma requires establishing the existence of a vector space~$L_u$ such that
\[\dim A_i\cup L_1=\|S_i\|+\dim L_1+h\qquad\mbox{and}\qquad\lin{A_i\cup L_1}\cap\lin{A_j\cup L_1}=L_u\]
for all~$1\leq i < j \leq k$ and
such that the dimension of~$L_u$ is~$\dim L_1+h$,
where~$h$ is the distance between~$u$ and $u_1$.
Since every branch rooted at~$u$ is at capacity, we obtain using \eqref{eq:allintersect1C} that
\begin{equation}
\dim A_i\cup A'=r(M)-\left(\|S'_1\|-h-\|S_i\|\right)=\dim A'+h+\|S_i\|\label{eq:allintersect2}
\end{equation}
for every~$i=1,\ldots,k$.
Since every~$A_i$ is a subspace of~$\lin{\widehat{S'_1}}$,
we derive from \eqref{eq:allintersect1A}, \eqref{eq:allintersect1B}, and \eqref{eq:allintersect2} that
\begin{align}
\dim A_i\cup L_1 &= \dim(A_i \cup A') + \dim(\lin{A_i \cup L_1}\cap A') - \dim(A') \nonumber\\
&=\dim L_1+h+\|S_i\|.\label{eq:allintersect3}
\end{align}
Since~$(T,f)$ is a depth-decomposition, it holds that
\begin{equation}
\dim A'\cup\bigcup_{j\in J} A_j\le r(M)-\|S'_1\|+h+\sum_{j\in J}\|S_j\|=\dim A'+h+\sum_{j\in J}\|S_j\|\label{eq:allintersect4}
\end{equation}
for all~$J\subseteq\{1,\ldots,k\}$, and analogously to the proof of \eqref{eq:allintersect3},
we derive from \eqref{eq:allintersect4} that
\begin{equation}
\dim L_1\cup\bigcup_{j\in J} A_j\le\dim L_1+h+\sum_{j\in J}\|S_j\|.\label{eq:allintersect5}
\end{equation}
We now obtain using \eqref{eq:allintersect3} and \eqref{eq:allintersect5} for~$J=\{i,j\}$ that
\begin{align*}
 & \dim L_1+h+\|S_i\|+\|S_j\| \\
 & \ge \dim A_i\cup A_j\cup L_1 \\
 & = \dim A_i\cup L_1 + \dim A_j\cup L_1 - \dim \lin{A_i\cup L_1}\cap\lin{A_j\cup L_1} \\
 & = 2\dim L_1+2h+\|S_i\|+\|S_j\|-\dim \lin{A_i\cup L_1}\cap\lin{A_j\cup L_1}
\end{align*}
for all~$1\le i<j\le k$.
It follows that
\begin{equation}
\dim \lin{A_i\cup L_1}\cap\lin{A_j\cup L_1} \ge \dim L_1+h \label{eq:allintersect6}
\end{equation}
for all~$1\le i<j\le k$.
On the other hand, it holds using \eqref{eq:allintersect1D}, \eqref{eq:allintersect3} and \eqref{eq:allintersect5} applied for~$J=\{1,\ldots,k\}\setminus\{i\}$ that
\begin{align*}
 & \dim \lin{A_i\cup L_1}\cap\lin{A_j\cup L_1} \\
 & \le \dim \lin{A_i\cup L_1}\cap\lin{L_1\cup\bigcup_{j'\not=i}A_{j'}} \\
 & = \dim A_i\cup L_1 + \dim L_1\cup\bigcup_{j'\not=i}A_{j'} - \dim L_1\cup\widehat{S'_1}\\
 & \le \dim L_1+h+\|S_i\| +  \dim L_1+h+\sum_{j'\not=i}\|S_{j'}\|- \dim L_1 - \|S'_1\|\\
 & = \dim L_1+h.
\end{align*}
We conclude that equality always holds in \eqref{eq:allintersect6}.
Hence, there exists a subspace~$L_u$ of dimension~$\dim L_1+h$ such that
\begin{align*}
&\dim A_i\cup L_1=\|S_i\|+\dim L_1+h=\|S_i\|+\dim L_u\qquad\mbox{and}\\
&\lin{A_i\cup L_1}\cap\lin{A_j\cup L_1}=L_u
\end{align*}
for all~$1\leq i < j \leq k$.
\end{proof}

Using Lemma~\ref{lm:allintersect}, we prove the following.

\begin{lemma}
\label{lm:contract}
Let~$(T,f)$ be a depth-decomposition of a vector matroid~$M$,
$u_1$ a vertex of~$T$ with at least 2 children, and $u_2,\ldots,u_k$ all ancestors of~$u_1$ with at least two children (listed in the increasing distance from~$u_1$).
Assume that every branch rooted at one the vertices~$u_1,\ldots,u_k$ is at capacity, and
let~$L_1$ be the space~$L_{u_1}$ from the statement of Lemma~\ref{lm:allintersect} applied with~$U=\{u_1,\ldots,u_k\}$.
Further, let~$S_1$ be any branch rooted at~$u_1$ and $f_1$ the restriction of~$f$ to~$\widehat{S_1}$.

The pair~$(S_1,f_1)$ is a depth-decomposition of the vector matroid~$(M/L_1)\left[\widehat{S_1}\right]$ and
a branch of~$(S_1,f_1)$ is at capacity if and only if it is at capacity in~$(T,f)$.
In addition, if~$(S'_1,f'_1)$ is another depth-decomposition of the matroid~$(M/L_1)\left[\widehat{S_1}\right]$,
then~$(T',f')$ is a depth-decomposition of the matroid~$M$,
where~$T'$ is obtained from~$T$ by replacing~$S_1$ with~$S'_1$, and
the function~$f'$ is defined as~$f'(x)=f'_1(x)$ for~$x\in\widehat{S_1}$, and $f'(x)=f(x)$ otherwise.
\end{lemma}

\begin{proof}
Let~$X$ be the set of elements of~$M$, let~$A_1$ be the linear hull of~$\widehat{S_1}$, and
let~$A'_1$ be the linear hull of~$X\setminus\widehat{S_1}$.
By Lemma~\ref{lm:allintersect},
it holds that
\begin{equation}
A_1\cap A'_1\subseteq L_1\quad\mbox{and}\quad L_1\subseteq A'_1.\label{eq:contract}
\end{equation}
It follows that the matroids~$(M/L_1)\left[\widehat{S_1}\right]$ and $(M/A'_1)\left[\widehat{S_1}\right]$ are the same.
In addition, since all branches rooted at~$u_1,\ldots,u_k$ are at capacity,
it also holds that $\dim A'_1=\dim M-\|S_1\|$ by \eqref{eq:allintersect1C}.

Instead of verifying that $(S_1,f_1)$ is a depth-decomposition of~$(M/L_1)\left[\widehat{S_1}\right]$,
we verify that $(S_1,f_1)$ is a depth-decomposition of~$(M/A'_1)\left[\widehat{S_1}\right]$,
which is equivalent since the two matroids are the same.
Let~$X'$ be any subset of~$\widehat{S_1}$ and
let~$e$ be the number of edges on the paths from the~$f_1$-image of~$X'$ to the root of~$S_1$.
Since~$(T,f)$ is a depth decomposition of the matroid~$M$,
we obtain that
\[\dim X'\cup A'_1=\dim X'\cup\left(X\setminus\widehat{S_1}\right)\le\dim M-\|S_1\|+e.\]
Since the rank of the set~$X'$ in~$(M/A'_1)\left[\widehat{S_1}\right]$ is~$\dim X'\cup A'_1-\dim A'_1$ and $\dim A'_1=\dim M-\|S_1\|$,
we obtain that the rank of~$X'$ in~$(M/A'_1)\left[\widehat{S_1}\right]$ is at most~$e$ as desired.
Hence, $(S_1,f_1)$ is a depth-decomposition of~$(M/L_1)\left[\widehat{S_1}\right]$.

Let~$S$ be a branch of~$S_1$ rooted at a vertex~$u$,
let~$X'$ be the elements assigned to the leaves of the other branches rooted at~$u$, and
let~$e'$ be the number of edges contained in the other branches rooted at~$u$.
The branch~$S$ is at capacity in the depth-decomposition~$(S_1,f_1)$ of the matroid~$(M/L_1)\left[\widehat{S_1}\right]$
if and only if
the rank of~$\widehat{S_1}\setminus X'$ in the matroid~$(M/L_1)\left[\widehat{S_1}\right]$ is~$\|S_1\|-e'$.
The rank of the set~$\widehat{S_1}\setminus X'$ in the matroid~$(M/L_1)\left[\widehat{S_1}\right]$
is equal to~$\dim\left(\widehat{S_1}\setminus X'\right)\cup L_1-\dim L_1$,
which is equal to~$\dim\left(\widehat{S_1}\setminus X'\right)\cup A'_1-\dim A'_1$ by \eqref{eq:contract}.
Since~$\dim A'_1=\dim M-\|S_1\|$, we infer that
the branch~$S$ is at capacity in the depth-decomposition~$(S_1,f_1)$ of the matroid~$(M/L_1)\left[\widehat{S_1}\right]$
if and only if
$\dim\left(\widehat{S_1}\setminus X'\right)\cup A'_1$ is~$\dim M-e'$.
The latter holds if and only if~$S$ is at capacity in the depth-decomposition~$(T,f)$ of the matroid~$M$.

It remains to prove the last part of the lemma.
We proceed by induction of~$k$. The base case is~$k=1$, i.e., the case when the branch~$S_1$ is primary.
Let~$(S'_1,f'_1)$ be another depth-decomposition of the matroid~$(M/L_1)\left[\widehat{S_1}\right]$, and
let~$(T',f')$ be obtained as described in the statement of the lemma.
Denote by~$S_2,\ldots,S_{\ell}$ the remaining branches rooted at~$u_1$.
Let~$X'$ be any subset of the elements of~$M$, let~$e_i$, $i=1,\ldots,\ell$, be the number of edges on the paths from the vertices in the~$f'$-image of~$X'\cap\widehat{S_i}$ to~$u_1$, and let~$f_i$, $i\ge 2$ be the restriction of~$f$ to~$\widehat{S_i}$.
Since~$(S'_1,f'_1)$ is a depth-decomposition of the matroid~$(M/L_1)\left[\widehat{S_1}\right]$,
we obtain that
\begin{equation}
\dim \left(X'\cap\widehat{S_1}\right)\cup L_1-\dim L_1\le e_1.\label{eq:contract1}
\end{equation}
Similarly, as~$(S_i,f_i)$ is a depth-decomposition of the matroid~$(M/L_1)\left[\widehat{S_i}\right]$ by the already proven part of this lemma,
we obtain that
\begin{equation}
\dim \left(X'\cap\widehat{S_i}\right)\cup L_1-\dim L_1\le e_i.\label{eq:contract2}
\end{equation}
Using \eqref{eq:contract1} and \eqref{eq:contract2},
we infer that
\[\dim X'\le \dim X'\cup L_1\ = \dim \bigcup_{i=1}^{\ell} \left[ (X'\cap \widehat S_i)\cup L_1 \right]  \le \dim L_1+\sum_{i=1}^{\ell} e_i.\]
Since the choice of~$X'$ was arbitrary, $(T',f')$ is a depth-decomposition of~$M$.

The inductive step proceeds as follows.
Let~$k\ge 2$,
let~$S_2$ be the branch rooted at~$u_2$ in~$T$ containing~$S_1$,
let~$S'_2$ be the branch rooted at~$u_2$ in~$T'$ containing~$S'_1$, and
let~$f_2$ and $f'_2$ be the restrictions of~$f$ and $f'$, respectively, to the elements of~$\widehat{S_2}=\widehat{S'_2}$.
Finally, let~$L_2$ be the space~$L_{u_2}$ from the statement of Lemma~\ref{lm:allintersect} applied with~$U=\{u_1,\ldots,u_k\}$.
By the already proven part of the lemma, $(S_2,f_2)$ is a depth-decomposition of the matroid~$(M/L_2)\left[\widehat{S_2}\right]$.
By the base of the induction, which we have already proven,
$(S'_2,f'_2)$ is also a depth-decomposition of the matroid~$(M/L_2)\left[\widehat{S_2}\right]$.
We now apply the induction to the vertex~$u_2$ of~$T$ with replacing the branch~$S_2$ with~$S'_2$ in~$T$, and
conclude that $(T',f')$ is a depth-decomposition of the matroid~$M$.
\end{proof}

We next extend Lemma~\ref{lm:primbranch} to all branches.

\begin{lemma}
\label{lm:allbranch}
Let~$(T,f)$ be a depth-decomposition of a vector matroid~$M$, and
$S_0$ a branch of~$T$ rooted at a vertex~$u_0$ such that $S_0$ is not at capacity.
Suppose that every branch rooted at an ancestor of~$u_0$ is at capacity.
Let~$T'$ be the rooted tree obtained from~$T$ by changing the root of~$S_0$ to be the parent of~$u_0$.
Then, $(T',f)$ is a depth-decomposition of~$M$.
\end{lemma}

\begin{proof}
If~$S_0$ is primary, we apply Lemma~\ref{lm:primbranch}.
Hence, we can assume that $S_0$ is not primary.
Let~$U$ be the set of ancestors of~$u_0$ that have at least two children;
note that $U$ is non-empty since the branch~$S_0$ is not primary.
We next apply Lemma~\ref{lm:allintersect} to get subspaces~$L_u$, $u\in U$, with properties described in the statement of the lemma.
Let~$u_1$ be the vertex of~$U$ nearest to~$u_0$,
$L_1$ the vector space~$L_{u_1}$,
$S_1$ the branch rooted at~$u_1$ that contains~$u_0$, and
$f_1$ the function~$f$ restricted to the leaves of~$S_1$.
By Lemma~\ref{lm:contract},
$(S_1,f_1)$ is a depth-decomposition of the matroid~$(M/L_1)\left[\widehat{S_1}\right]$ and
the branch~$S_0$ is not at capacity in~$(M/L_1)\left[\widehat{S_1}\right]$.

Let~$S'_1$ be the rooted tree obtained from~$S_1$ by changing the root of~$S_0$ to be the parent of~$u_0$.
Since the branch~$S_0$ is primary in the depth-decomposition~$(S_1,f_1)$,
we obtain that $(S'_1,f_1)$ is a depth-decomposition of the matroid~$(M/L_1)\left[\widehat{S_1}\right]$ by Lemma~\ref{lm:primbranch}.
The fact that $(T',f)$ is a depth-decomposition of~$M$ now follows from Lemma~\ref{lm:contract}.
\end{proof}

We are now ready to present one of the main results of this section.

\begin{theorem}
\label{thm:capacity}
There exists a polynomial time algorithm that, given a vector matroid~$M$ and a depth-decomposition~$(T,f)$ of~$M$,
outputs an extended depth-decompo\-si\-tion~$(T',f',g)$ of~$M$ such that
the depth of~$T'$ is at most the depth of~$T$ and every branch of~$T'$ is at capacity.
\end{theorem}

\begin{proof}
The algorithm first modifies~$(T,f)$ to a depth-decomposition~$(T',f')$ such that every branch of~$T'$ is at capacity.
This is done iteratively as follows.
At each iteration,
the algorithm searches in the increasing order given by the distance from the root
for a vertex~$u$ with at least two children such that
a branch rooted at~$u$ is not at capacity.
The depth-decomposition is then modified by changing the root of the branch that
is not at capacity to the parent of~$u$ (note that $u$ cannot be the root of the whole tree by Lemma~\ref{lm:rootbranch}).
Lemma~\ref{lm:allbranch} implies that the new tree is again a depth-decomposition of~$M$.
This finishes the iteration and the algorithm starts a new iteration (again searching in the order given by the distance from the root).
Note that the number of leaves is preserved and at each iteration the sum of the distances of the leaves to the root of the tree decreases.
Since~$T$ has~$r(M)$ edges, it has at most~$r(M)$ leaves and each leaf is at distance at most~$r(M)$ from the root.
It follows that the algorithm stops after at most~$r(M)^2$ iterations
producing a depth-decomposition~$(T',f')$ of~$M$ such that every branch of~$T'$ is at capacity.
Since the depth of the tree is never increased by the algorithm,
the depth of~$T'$ is at most the depth of~$T$.

We next construct the function~$g$.
Let~$U$ be the set containing the root of~$T'$ and all vertices of~$(T',f')$ with at least two children, and
let~$L_u$ be the vector spaces as described in Lemma~\ref{lm:allintersect}
while setting~$L_u$ to be the space containing the zero vector only for the root~$u$ of~$T'$.
Observe that the spaces~$L_u$, $u\in U$, can be algorithmically constructed.
Indeed, if~$u\in U$ and
the space~$L_{u'}$ has already been constructed for the nearest ancestor~$u'$ of~$u$ contained in~$U$,
then~$L_u$ is the intersection of the linear hulls of~$\widehat{S_1}\cup L_{u'}$ and $\widehat{S_2}\cup L_{u'}$
where~$S_1$ and $S_2$ are any two branches rooted at~$u$.

The function~$g$ is defined for the vertices of~$T'$ in the order based on their distance from the root.
Let~$v$ be a vertex of~$T'$ and assume that $g$ has been defined for all ancestors of~$v$ (except the root).
We distinguish two cases.
The first case is that $v$ has at least two descendants that are leaves.
If~$v$ has at least two children, then let~$u$ be the vertex~$v$ itself, and
let~$u$ be the nearest descendant of~$v$ contained in~$U$ otherwise.
We set~$g(v)$ to be any vector of~$L_u$ that
is linearly independent of the~$g$-image of the vertices on the path from the parent of~$v$ to the root.
Since the dimension of~$L_u$ is the depth of~$u$, which is at least the depth of~$v$, such a vector always exists.
The other case is that $v$ and all its descendants have at most one child.
If~$v$ is a leaf, let~$v'$ be the vertex~$v$ itself. Otherwise, let~$v'$ be the only leaf descendant of~$v$.
We set~$g(v)$ to be any vector in the~$f'$-preimage of~$v'$ that
is linearly independent of the~$g$-image of the vertices on the path from the parent of~$v$ to the root.
Since the branch~$S$ containing~$v$ that is rooted at the nearest ancestor~$u$ of~$v$ contained in~$U$
is a depth-decomposition of the matroid~$(M/L_u)\left[\widehat{S}\right]$ (see Lemma~\ref{lm:contract}), such a vector always exists.
Note that the function~$g$ can be algorithmically constructed.

We now verify that $(T',f',g)$ is an extended depth-decomposition of the matroid~$M$.
Observe that for every vertex~$u\in U$, $L_u$ contains all spaces~$L_{u'}$ for ancestors~$u'$ of~$u$ contained in~$U$.
Hence, the~$g$-image of the vertices on the path from~$u$ to the root form a basis of~$L_u$ for every~$u\in U$.
In particular, $K_u=L_u$ for every~$u\in U$.
Let~$v$ be a leaf of~$T'$ and $u$ its nearest ancestor contained in~$U$.
By Lemma~\ref{lm:contract}, the branch~$S$ rooted at~$u$ containing~$v$ together with the restriction of~$f'$ to~$\widehat{S}$
is a depth-decomposition of the matroid~$(M/L_u)\left[\widehat{S}\right]$.
Note that the branch~$S$ is actually a path rooted at~$u$.
This implies that the~$g$-image of the vertices on the path from~$v$ to the child of~$u$ contained in~$S$
form a basis of the linear hull of the~$f'$-preimage of~$v$ quotiened by~$L_u$.
Since the~$g$-image of the vertices on the path from~$u$ to the root form a basis of~$K_u=L_u$,
we conclude that every vector in the~$f'$-preimage of~$v$ is contained
in the linear hull of the~$g$-image of the vertices on the path from~$v$ to the root of~$T'$.
\end{proof}

We obtain the following two statements as corollaries of Theorem~\ref{thm:capacity}.

\begin{corollary}
\label{cor:capacity}
Every vector matroid~$M$ has a depth-decomposition~$(T,f)$ with depth~$\bd(M)$ such that
every branch of~$T$ is at capacity.
\end{corollary}

\begin{corollary}
\label{cor:extdecomp}
If~$(T,f)$ is a depth-decomposition of a vector matroid~$M$,
then there exists~$g$ such that $(T,f,g)$ is an extended depth-decomposition of~$M$.
\end{corollary}

\begin{proof}
Let~$(T',f',g)$ be the extended depth-decomposition of~$M$ constructed in Theorem~\ref{thm:capacity}.
Since~$T'$ was obtained from~$T$ by rerooting some of the branches,
the vertices of~$T$ and $T'$ are in one-to-one correspondence.
In particular, 
the roots of~$T$ and $T'$ are the same vertex,
the functions~$f$ and $f'$ are identical, and
$g$ is a well-defined function from the non-root vertices of~$T$.
Further, notice that any vertex on a given root-to-leaf path in~$T'$
is also on the path from the root to the corresponding leaf in~$T$.
Since~$(T',f',g)$ is an extended depth decomposition,
any element~$x$ of~$M$ is contained in the linear hull of the~$g$-image of the vertices on the path in~$T'$ from~$f(x)$ to the root, and
thus~$(T,f,g)$ is an extended depth decomposition as well.
\end{proof}

We conclude this section with a theorem that asserts that every vector matroid
has a depth-decomposition of minimum depth such that every branch is both at capacity and solid.
Before we can state and prove the theorem, we need three auxiliary lemmas.

\begin{lemma}
\label{lm:disconnected}
Let~$M$ be a vector matroid with no loops and $M_1,\ldots,M_k$ be its components.
For each~$i$, suppose~$(T_i,f_i,g_i)$ is an extended depth-decomposition of~$M_i$.
Let~$T$ be the rooted tree obtained from the trees~$T_1,\ldots,T_k$ by identifying their roots,
let~$f$ be the mapping from the elements of~$M$ to the leaves of~$T$ such that
$f(x)=f_i(x)$ if~$x$ belongs to~$M_i$, and
let~$g$ be the mapping such that $g(v)=g_i(v)$ if~$v$ is a non-root vertex of~$T_i$.
The triple~$(T,f,g)$ is an extended depth-decomposition of~$M$.
\end{lemma}

\begin{proof}
Let~$X_1,\ldots,X_k$ be the elements of~$M$ contained in~$M_1,\ldots,M_k$, respectively.
Since~$M_1,\ldots,M_k$ are components of~$M$,
the rank of~$M$ is the sum of the ranks of~$M_1,\ldots,M_k$.
In particular, the number of edges of~$T$ is the rank of~$M$.
Since every vertex~$x\in X_i$ can be expressed as
a linear combination of the~$g_i$-image of the vertices on the path from~$f_i(x)$ to the root of~$T_i$,
it is also a linear combination of the~$g$-image of the vertices on the path from~$f(x)$ to the root of~$T$.
In particular,
the linear hull of the~$g$-image of all non-root vertices of~$T$ is equal to the linear hull of~$X_1\cup\cdots\cup X_k$.
This implies that the vectors in the~$g$-image of all non-root vertices of~$T$ are linearly independent.
Let~$X'$ be any set of vertices of~$M$.
Since any vector in~$X'$ can be expressed as a linear combination of the~$g$-image of
the non-root vertices on the paths from~$f(X')$ to the root,
the dimension of the linear hull of~$X'$ is at most the number of such vertices,
which is equal to the number of edges on the paths.
It follows that $(T,f,g)$ is an extended depth-decomposition of~$M$.
\end{proof}

\begin{lemma}
\label{lm:connected}
Let~$(T,f,g)$ be an extended depth-decomposition of a vector matroid~$M$, and
let~$u$ be a vertex with at least two children.
If~$S$ is a branch rooted at~$u$, then~$\widehat{S}$ is a union of some components and loops of~$M/K_u$.
\end{lemma}

\begin{proof}
Let~$X$ be all the vectors of~$M$, $A$ their linear hull, and $A_S$ the linear hull of the~$g$-images of the non-root vertices of~$S$.
Since the vectors of~$\image(g)$ form a basis of the vector space~$A$ and
every element~$x$ of the matroid~$M$ is a linear combination of the vectors in the~$g$-image of the vertices on the path from~$f(x)$ to the root,
$A_S$ is a subset of the linear hull of~$\widehat{S}$ and
the linear hull of~$\widehat{S}$ is a subset of the linear hull of~$A_S\cup K_u$.
Since the dimension of~$A_S$ is~$\|S\|$, we obtain that
\[\dim \widehat{S}\cup K_u-\dim K_u=\dim A_S\cup K_u-\dim K_u=\|S\|.\]
Along the same lines, we obtain that
\[\dim \left(X\setminus\widehat{S}\right)\cup K_u-\dim K_u=\dim X-\dim K_u-\|S\|.\]
Since the rank of~$M/K_u$ is~$\dim X-\dim K_u$,
the rank of~$\widehat{S}$ in~$M/K_u$ is~$\|S\|$, and
the rank of~$X\setminus\widehat{S}$ in~$M/K_i$ is~$\dim X-\dim K_u-\|S\|$.
Hence, the set~$\widehat{S}$ is a union of components and loops of~$M/K_u$.
\end{proof}

\begin{lemma}
\label{lm:replacebranch}
Let~$(T,f,g)$ be an extended depth-decomposition of a vector matroid~$M$, and
let~$u$ be a vertex with at least two children.
Further, let~$S$ be a branch rooted at~$u$ and
$(T',f',g')$ be an extended depth-decomposition of the matroid~$(M/K_u)\left[\widehat{S}\right]$.
Let~$T''$ be the rooted tree obtained by removing from~$T$ the branch~$S$ and identifying the root of~$T'$ with~$u$,
setting~$f''(x)=f'(x)$ for elements~$x\in\widehat{S}$ and $f''(x)=f(x)$ for other elements~$x$ of~$M$, and
setting~$g''(v)=g'(v)$ for non-root vertices of~$T'$ and $g''(v)=g(v)$ for other non-root vertices of~$T''$.
The triple~$(T'',f'',g'')$ is an extended depth-decomposition of~$M$.
\end{lemma}

\begin{proof}
Since the rank of the matroid~$(M/K_u)\left[\widehat{S}\right]$ is equal to~$\dim\widehat{S}\cup K_u-\dim K_u=\|S\|$,
the trees~$T'$ and $S$ have the same number of edges.
This implies that the trees~$T$ and $T''$ also have the same number of edges.
In order to establish that $(T'',f'',g'')$ is an extended depth-decomposition of~$M$,
it is now enough to verify that every element~$x$ of the matroid~$M$
is a linear combination of the~$g''$-image of the vertices on the path from~$f''(x)$ to the root.
If this is the case, then the~$g''$-image of all non-root vertices of~$T''$
form a basis of the vector space generated by the elements of~$M$, and
the rank of any subset~$X'$ of the elements of~$M$
is at most the number of non-root vertices on the paths from the~$f''$-image of~$X'$ to the root,
which is equal to the number of edges on such paths.

Let~$x$ be any element of the matroid~$M$.
If~$x\not\in\widehat{S}$,
then~$f(x)=f''(x)$ and the~$g$-image of the vertices on the path from~$f(x)=f''(x)$ to the root is the same as the~$g''$-image,
in particular, $x$ is a linear combination of the~$g''$-image of the vertices on this path.
Hence, we need to analyze the case when~$x\in\widehat{S}$.
In this case, the path from~$f''(x)$ to the root contains all vertices on the path from~$u$ to the root of~$T''$,
which implies that the linear hull of the~$g''$-image of the vertices on the path from~$f''(x)$ to the root of~$T''$ contains~$K_u$.
Since~$(T',f',g')$ is an extended depth-decomposition of the matroid~$(M/K_u)\left[\widehat{S}\right]$,
$x$ is contained in the linear hull of the union of~$K_u$ and
the~$g'$-image of the vertices on the path from~$f'(x)$ to the root of~$T'$.
Hence, $x$ is contained in the linear hull of the~$g''$-image of the vertices on the path from~$f''(x)$ to the root of~$T''$.
We conclude that $(T'',f'',g'')$ is an extended depth-decomposition of~$M$.
\end{proof}

We are now ready to prove the final theorem of this section.

\begin{theorem}
\label{thm:strongdecomp}
Every vector matroid~$M$ has an extended depth-decomposition $(T,f,g)$ of depth~$\bd(M)$ such that
every branch of~$T$ is both at capacity and solid.
\end{theorem}

\begin{proof}
We start with a depth-decomposition~$(T,f,g)$ of~$M$ with depth~$\td(M)$ and modify it iteratively as follows.
At each iteration, we first apply Theorem~\ref{thm:capacity} to obtain a depth-decomposition such that every branch is at capacity.
If every branch is solid, we stop.
If there is a branch~$S$ that is not solid, we proceed as follows.
Since~$S$ is not solid, the matroid~$(M/K_u)\left[\widehat{S}\right]$ is not connected,
where~$u$ is the root of~$S$.
Let~$M_1,\ldots,M_k$ be the components of the matroid~$(M/K_u)\left[\widehat{S}\right]$ and
let~$X_u$ be the set containing all loops of the matroid~$(M/K_u)\left[\widehat{S}\right]$.
Let~$(S_i,f_i,g_i)$ be an extended depth-decomposition of~$M_i$, $i=1,\ldots,k$, with depth~$\bd(M_i)$.
Since the branch-depth~$\bd(M_i)$ of~$M_i$ is at most the branch-depth of~$(M/K_u)\left[\widehat{S}\right]$ (as the branch-depth is a minor-monotone parameter),
the depth of each of the trees~$S_1,\ldots,S_k$ is at most the depth of~$S$.
By Lemmas~\ref{lm:disconnected} and \ref{lm:replacebranch},
it is possible to replace the branch~$S$ with the branches~$S_1,\ldots,S_k$ rooted at the root of~$S$
while assigning the elements of~$X_u$ to arbitrary leaves of the branches~$S_1,\ldots,S_k$.
Note that the depth of the new rooted tree does not exceed the depth of the original rooted tree.
In this way, we obtain a new extended depth-decomposition of~$M$, and we proceed to the next iteration.

We need to argue that the procedure described above eventually finishes.
Let~$a_i$ be the sum of the degrees of the vertices at distance~$i$ from the root.
During the iterations, the rooted tree~$T$ is modified by the algorithm presented in Theorem~\ref{thm:capacity} and
by the procedure described in the first paragraph.
The algorithm presented in Theorem~\ref{thm:capacity} selects a branch that is not at capacity and reroots it to the parent of its root.
Hence, there always exists~$i_0$ such that $a_0,\ldots,a_{i_0-1}$ are preserved and $a_{i_0}$ has increased by one.
Similarly, in the procedure described in the first paragraph, the degree of the vertex~$u$ has increased
while the degrees of all vertices with distance to the root smaller than~$u$ have not changed, in particular,
there exists~$i_0$ such that $a_0,\ldots,a_{i_0-1}$ are preserved and $a_{i_0}$ has increased.
We conclude that the vector~$(a_0,\ldots,a_{\bd(M)})$ lexicographically increases at each modification of the tree~$T$.
Since the sum of the degrees of the vertices at distance~$i$ from the root is bounded by the rank~$r$ of~$M$,
which is the total number of edges of~$T$,
there are at most~$r^{\bd(M)+1}$ vectors that can represent a sequence of sums of degrees of vertices of~$T$
at distance~$0,1,\ldots,\bd(M)$ from the root.
Hence, the procedure terminates after at most~$r^{\bd(M)+1}$ iterations.
\end{proof}

\section{Algorithm for finite fields}
\label{sec:finite}

In this section, we design a fixed parameter algorithm
for computing a depth-decomposition of a vector matroid over a fixed finite field.
To do so, we need to introduce additional notation.
Let~$(T,f,g)$ be an extended depth-decomposition of a vector matroid~$M$, and
let~$r$ be the rank of~$M$.
Let~$u_0,\ldots,u_{2r}$ be a depth-first-search transversal of the tree~$T$ (see Figure~\ref{fig:dfs} for an illustration).
For~$i\in\{0,\ldots,2r\}$,
we define~$A_i$ to be the linear hull of~$K_{u_i}$ and the~$f$-preimage of the leaves among the vertices~$u_0,\ldots,u_i$. Similarly, we define~$B_i$ to be the linear hull of~$K_{u_i}$ and the~$f$-preimage of the leaves among the vertices~$u_i,\ldots,u_{2r}$.
The sequence~$(u_i,A_i,B_i)_{i\in\{0,\ldots,2r\}}$ is called a \emph{transversal sequence} for~$(T,f,g)$.
Note that $A_i\cap B_i = K_{u_i}$ by the fact that $\image(g)$ is a basis of the linear hull of elements of~$M$.
If~$(T,f,g)$ is principal and $(T',f',g')$ is another extended depth-decomposition of~$M$,
we say that a branch~$S$ of~$T'$ is \emph{$i$-crossed}
if~$\widehat{S}$ contains the~$g$-image of a vertex on the path from~$u_i$ to the root of~$T$.

\begin{figure}
\begin{center}
\epsfbox{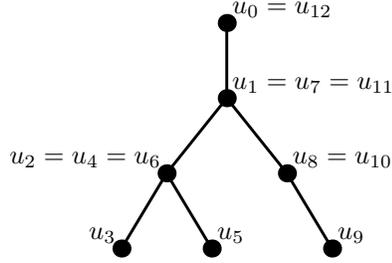}
\end{center}
\caption{An example of a depth-first-search transversal of a rooted tree.}
\label{fig:dfs}
\end{figure}

\begin{lemma}
\label{lm:subset}
Let~$M$ be a vector matroid with rank~$r$,
$(T,f,g)$ a principal extended depth-decomposition of~$M$, and
$(T',f',g')$ a solid extended depth-decomposition of~$M$.
Further, let~$(u_i,A_i,B_i)_{i\in\{0,\ldots,2r\}}$ be a transversal sequence for~$(T,f,g)$.
If~$S$ is a branch of~$(T',f',g')$ that is not~$i$-crossed,
then~$\widehat{S}$ is a subset of~$A_i\cup K_v$ or~$B_i\cup K_v$,
where~$v$ is the root of~$S$.
\end{lemma}

\begin{proof}
Let~$v$ be the root of~$S$,
$V$ the set of all vertices of~$T'$ that are not descendants of~$v$ in~$S$, and
$C$ the linear hull of~$g'(V)$.
Since~$\image(g)$ is a base of the linear hull of elements of~$M$,
the matroids~$(M/K_v)\left[\widehat{S}\right]$ and $(M/C)\left[\widehat{S}\right]$ are the same.
Since the branch~$S$ is solid in~$(T',f',g')$,
it follows that $\widehat{S}$ is the union of a component of~$M/C$ and possibly some loops corresponding to vectors contained in~$K_v\subseteq C$.

Let~$X$ be the set of elements of~$M$.
By the definition of an extended depth-decomposition,
the sets~$X\cap A_i$ and $X\cap B_i$ are unions of components and loops of the matroid~$M/K_{u_i}$.
Since~$S$ is not~$i$-crossed,
the leaf~$f'(g(v'))$ for every vertex~$v'$ on the path from~$u_i$ to the root of~$T$ is contained in~$V$.
It follows that $g(v')$ is contained in~$C$, and so~$K_{u_i}$ is a subspace of~$C$.
Hence, every component of the matroid~$M/K_{u_i}$ is a union of components and loops of the matroid~$M/C$.
In particular, each of the sets~$X\cap A_i$ and $X\cap B_i$ is a union of components and loops of the matroid~$M/C$.
Since~$\widehat{S}$ is the union of a component of~$M/C$ and possibly some vectors from~$K_v$,
it must hold that $\widehat{S}$ is a subset of the union of~$X\cap A_i$ and $K_v$ or the union of~$X\cap B_i$ and $K_v$.
The lemma follows.
\end{proof}

We will design a dynamic programming algorithm,
which will construct an optimal depth-decomposition of a vector matroid~$M$ using the information on the structure of~$M$
captured by an extended depth-decomposition of~$M$ produced by an approximation algorithm given in Theorem~\ref{thm:approx}.
The depth-decomposition will be constructed iteratively for elements of~$M$ in the order that
the leaves corresponding to them appear in the transversal sequence of the depth-decomposition produced by the approximation algorithm.
Since it would not be feasible to store all possible ``partial'' depth-decompositions,
we need a more succinct way of representing an already constructed part of a depth-decomposition,
which we now formally introduce.

Let~$T_0$ be a rooted tree;
we say that a mapping~$h$ from the non-root vertices to vectors is \emph{$k$-matchable} for some~$k\in\NN$
if there exists a surjective mapping~$g$ from~$\{1,\ldots,k\}$ to the leaves of~$T_0$ such that
for every~$j=1,\ldots,k$, the linear hull of the~$h$-image of the vertices on the path from~$g(j)$ to the root contains the~$j$-th unit vector.
A \emph{frontier} is a tuple~$(T_0,d,a,b,h)$ such that
$T_0$ is a rooted tree,
$d$, $a$, and $b$ are non-negative integers that sum to the number of edges of~$T_0$ (in particular, $T_0$ has at most~$d$ leaves), and
$h$ is a mapping from the non-root vertices of~$T_0$ to~$\FF^{d+a+b}$ such that
$\image(h)$ is a basis of~$\FF^{d+a+b}$ and $h$ is~$d$-matchable.
We will refer the middle~$a$ coordinates of~$h$-images as \emph{$A$-coordinates} and
to the last~$b$ coordinates as \emph{$B$-coordinates}.

\begin{figure}
\begin{center}
\epsfbox{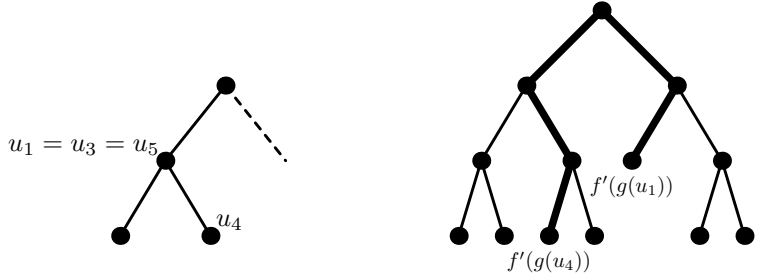}
\end{center}
\caption{An example of the construction of an~$i$-frontier.
         A part of the principal extended depth-decomposition~$(T,f,g)$ is depicted on the left and
	 the extended depth-decomposition~$(T',f',g')$ is shown on the right;
	 the subtree~$T_0$ for the~$4$-frontier is depicted in bold.}
\label{fig:frontier}
\end{figure}

Let~$(T,f,g)$ be a principal extended depth-decomposition of a vector matroid~$M$ with rank~$r$ over a field~$\FF$,
$(u_i,A_i,B_i)_{i\in\{0,\ldots,2r\}}$ a transversal sequence for~$(T,f,g)$, and
$(T',f',g')$ another extended depth-decomposition of a matroid~$M$.
The \emph{$i$-frontier} of~$(T',f',g')$ with respect to~$(T,f,g)$ and $(u_i,A_i,B_i)_{i\in\{0,\ldots,2r\}}$ is the frontier~$(T_0,d,a,b,h)$ obtained as described below;
see Figure~\ref{fig:frontier} for an illustration.
\begin{itemize}
\item The integer~$d$ is the depth of~$u_i$ in~$T$.
\item~$T_0$ is the rooted subtree of~$T'$ formed by the paths from the root of~$T'$ to the~$f'$-images of~$v^u_1,\ldots,v^u_d$,
      where~$v^u_1,\ldots,v^u_d$ are the~$g$-images of the vertices on the path from the root of~$T$ to~$u_i$ (in this order).
\item The integers~$a$ and $b$ are the smallest integers for that
      there exists an~$a$-dimensional subspace~$L_A$ of~$A_i$ and a~$b$-dimensional subspace~$L_B$ of~$B_i$ such that
      the linear hull of the~$g'$-images of the vertices of~$T_0$
      is a subspace of the linear hull of~$v^u_1,\ldots,v^u_d$, $L_A$, and $L_B$.
\item Finally, $h$ is a mapping from the non-root vertices of~$T_0$ to~$\FF^{d+a+b}$ that satisfies the following:
      Let~$v^A_1,\ldots,v^A_a$ be a basis of~$L_A$, and
      let~$v^B_1,\ldots,v^B_b$ be a basis of~$L_B$.
      The value~$h(v)$ for a non-root vertex~$v$ of~$T_0$ is equal to the coordinates of~$g'(v)$
      with respect to the (linearly independent) vectors~$v^u_1,\ldots,v^u_d,v^A_1,\ldots,v^A_a,v^B_1,\ldots,v^B_b$.
\end{itemize}
Note that we use
$i$ both as an index for the transversal sequence and as an index of the frontier in order to emphasize the link between the two indices. To see that $i$-frontiers are indeed frontiers, observe first that $K_{u_i}$ is the linear hull of~$v^u_1,\ldots,v^u_d$.
Since~$K_{u_i}=A_i\cap B_i$ and $K_{u_i}$ is contained in the linear hull of the~$g'$-images of the vertices of~$T_0$,
the subspaces~$L_A\subseteq A_i$ and $L_B\subseteq B_i$ are uniquely determined and
the dimension of the linear hull of the~$g'$-images of the vertices of~$T_0$ is equal to~$d+a+b$.
Since~$g'$ is a bijection from the non-root vertices of~$T$ to a basis of the linear hull of elements of~$M$,
the number of edges of~$T_0$ must be~$d+a+b$.
Finally, since~$T_0$ contains the~$f'$-images of~$v^u_1,\ldots,v^u_d$,
the function~$h$ is~$d$-matchable.

The following lemma justifies the definition of an \emph{$i$-frontier}.
Informally speaking, the lemma says that an~$i$-frontier splits an extended depth-decomposition into a left and a right side,
and that two depth-decompositions with the same~$i$-frontier can be glued together on it, taking one side from each decomposition;
also see Figure~\ref{fig:combine} for an illustration.
In this way, the~$i$-frontier contains all information that needs to be stored
when iteratively constructing a depth-decomposition of~$M$ in a dynamic way for the elements of contained in~$A_0,A_1,\ldots,A_{2r}$.

\begin{figure}
\begin{center}
\epsfbox{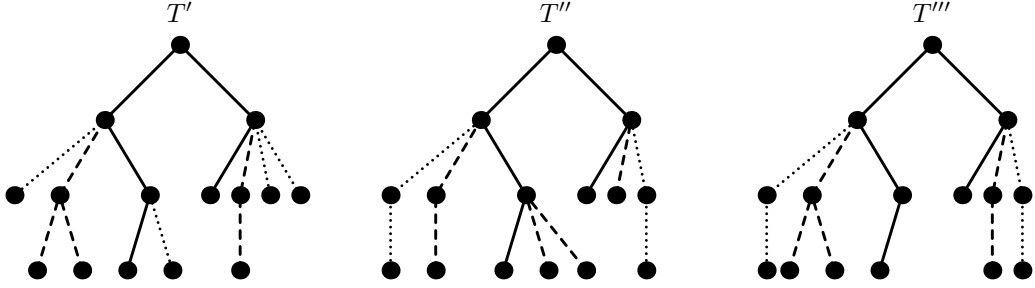}
\end{center}
\caption{An illustration of the operation described in the statement of Lemma~\ref{lm:combine}.
         The trees~$T'$, $T''$ and $T'''$ are respectively depicted on the left, middle, and right.
	 The edges of the~$i$-frontier are drawn solid,
	 the edges of branches~$S$ of~$T'$ such that $\widehat{S}\subseteq A_i\cup K_A$ and
	              branches~$S$ of~$T''$ such that $\widehat{S}\subseteq A_i\cup K_B$ are drawn dashed, and
	 the edges of branches~$S$ of~$T'$ such that $\widehat{S}\subseteq B_i\cup K_A$ and
	              branches~$S$ of~$T''$ such that $\widehat{S}\subseteq B_i\cup K_B$ are drawn dotted;
         branches of~$T'''$ are drawn in the same way as the branches of~$T'$ and $T''$ corresponding to them.
         }
\label{fig:combine}
\end{figure}

\begin{lemma}
\label{lm:combine}
Let~$(T,f,g)$ be a principal extended depth-decomposition of a vector matroid~$M$ with rank~$r$,
let~$(u_i,A_i,B_i)_{i\in\{0,\ldots,2r\}}$ be a transversal sequence for~$(T,f,g)$, and
let~$(T',f',g')$ and $(T'',f'',g'')$ be two solid extended depth-decompositions of~$M$.
Suppose that $i\in\{0,\ldots,2r\}$ is such that the~$i$-frontiers of~$(T',f',g')$ and $(T'',f'',g'')$ with respect to~$(T,f,g)$ and $(u_i,A_i,B_i)_{i\in\{0,\ldots,2r\}}$
are the same.
Let~$T_0$ be the rooted tree of the~$i$-frontier,
which we identify with the corresponding subtrees of~$T'$ and $T''$.
Further, let~$K_A$ be the linear hull of the~$g'$-image of the vertices of~$T_0$,
let~$K_B$ be the linear hull of the~$g''$-image of the vertices of~$T_0$, and
set~$C_A=K_A\cap A_i$ and $C_B=K_B\cap B_i$.

Obtain~$T'_A$ from~$T'$ by removing all branches~$S$ with~$\widehat{S}\subseteq B_i\cup K_A$ that are not~$i$-crossed,
$T''_B$ from~$T''$ by removing all branches~$S$ with~$\widehat{S}\subseteq A_i\cup K_B$ that are not~$i$-crossed, and
$T'''$ by gluing~$T'_A$ and $T''_B$ together on the vertices of~$T_0$ (note that both~$T'_A$ and $T''_B$ contain~$T_0$).
Finally, let~$f'''$ be a function from the elements of~$M$ to the leaves of~$T'''$ defined as follows.
If~$x\in A_i\setminus C_A$, then~$f'''(x)=f'(x)$.
If~$x\in B_i\setminus C_B$, then~$f'''(x)=f''(x)$.
If~$x\in C_A$,
then let~$f'''(x)$ be any leaf~$u$ of~$T_0$ such that
$x$ is contained in the linear hull of the~$g'$-image of the vertices on the path from~$u$ to the root.
Finally, if~$x\in C_B\setminus C_A$,
then let~$f'''(x)$ be any leaf~$u$ of~$T_0$ such that
$x$ is contained in the linear hull of the~$g''$-image of the vertices on the path from~$u$ to the root.
Then, $(T''',f''')$ is a depth-decomposition of~$M$.
\end{lemma}

\begin{proof}
We first verify that $f'''$ is well-defined.
Consider an element~$x$ of~$M$ and suppose that $x\in A_i\setminus C_A$.
Since the space~$C_A$ is the intersection of~$A_i$ and $K_A$,
the element~$x$ does not belong to~$K_A$ and so~$f'(x)$ is not contained in~$T_0$,
i.e., $f'(x)$ is contained in a branch of~$T'$ that is not~$i$-crossed.
Consider a maximal such branch~$S$.
Lemma~\ref{lm:subset} yields that $\widehat{S}\subseteq A_i\cup K_A$ or~$\widehat{S}\subseteq B_i\cup K_A$, and
since~$x\in A_i\setminus K_A$ and $A_i\cap B_i=K_{u_i}\subseteq K_A$, it follows that $\widehat{S}\nsubseteq B_i\cup K_A$.
Hence, the branch~$S$ is contained in~$T'_A$ and $f'''(x)$ is well-defined.
The case that $x\in B_i\setminus C_B$ is symmetric.

The next case that we analyze is that $x\in C_A$.
Since~$C_A\subseteq K_A$ and the~$g'$-image of~$T_0$ is a basis of~$K_A$, the element~$x$ can be uniquely expressed as a linear combination of the elements of the~$g'$-image of~$T_0$.
Moreover, as  $(T',f',g')$ is an extended depth-decomposition of~$M$,
the~$g'$-preimage of the basis elements with non-zero coefficients in this linear combination is contained in the path from~$f'(x)$ to the root of~$T'$.
Hence, this preimage is contained in a path from the root of~$T_0$ to one of its leaves, say~$u$.
This implies that $x\in K_u$ in~$T'$, so
$f'''(x)$ can be set to be~$u$.
The final case is that $x\in C_B\setminus C_A$.
Since~$x\in C_B$, the argument above yields that there exists a leaf~$u$ of~$T_0$ such
that $x\in K_u$ in~$T''$, and so~$f'''(x)$ can be set to be~$u$.

For completeness, we check that the sets~$A_i\setminus C_A$, $B_i\setminus C_B$, $C_A$, and $C_B\setminus C_A$ are pairwise disjoint so that $f'''(x)$ is never multiply defined. Observe that $A_i\setminus C_A=A_i\setminus K_A$ is disjoint from~$B_i$ as~$A_i\cap B_i = K_{u_i} \subseteq K_A$. Hence~$A_i\setminus C_A$ is disjoint from both~$B_i\setminus C_B$ and $C_B\setminus C_A \subseteq B_i$. Similarly, $B_i\setminus C_B=B_i\setminus K_B$ is disjoint from both~$A_i\setminus C_A$ and $C_A$. The remaining pairs are disjoint by definition, so indeed~$f'''$ is a well-defined function.

We next verify that the number of edges of~$T'''$ is the rank of~$M$,
which is
\[\dim A_i+\dim B_i-\dim A_i\cap B_i=\dim A_i+\dim B_i-\dim K_{u_i}=\dim A_i+\dim B_i-d.\]
Let~$(T_0,d,a,b,h)$ be the common~$i$-frontier of~$(T',f',g')$ and $(T'',f'',g'')$.
Observe that $\dim A_i\cap K_A=d+a$ and $\dim B_i\cap K_A=d+b$.
By Lemma~\ref{lm:subset}, 
it holds that $\widehat{S}\subseteq A_i\cup K_A$ or~$\widehat{S}\subseteq B_i\cup K_A$
for every branch~$S$ of~$T'$ that is not~$i$-crossed.
Hence, $A_i$ is contained in the linear hull of the~$g'$-image of~$T'_A$ and
the dimension of this linear hull is~$\dim A_i+b$.
Since~$\image(g')$ is a basis, the number of edges of~$T'_A$ is~$\dim A_i+b$.
A symmetric argument yields that the number of edges of~$T''_B$ is~$\dim B_i+a$.
The tree~$T_0$ has~$d+a+b$ edges since~$(T_0,d,a,b,h)$ is a frontier,
so~$T'''$ has
\[\left(\dim A_i+b\right)+\left(\dim B_i+a\right)-\left(d+a+b\right)=\dim A_i+\dim B_i-d\]
edges, as desired.

To finish the proof of the lemma, we need to show that $(T''',f''')$ is a depth-decomposition of~$M$.
To do so, we define a function~$g'''$ such that $(T''',f''',g''')$ is an extended depth-decomposition of~$M$.
Let~$u$ be a non-root vertex of~$T'''$.
If~$u$ is a non-root vertex of a branch~$S$ of~$T'$ with~$\widehat{S}\subseteq A_i\cup K_A$ that is not~$i$-crossed,
we set~$g'''(u)$ to be any nonzero vector of~$A_i$ such that $g'(u)-g'''(u)\in K_A$,
i.e., the vectors~$g'(u)$ and $g'''(u)$ are the same in the space quotioned by~$K_A$.
Similarly,
if~$u$ is a non-root vertex of a branch~$S$ of~$T''$ with~$\widehat{S}\subseteq B_i\cup K_B$ that is not~$i$-crossed,
we set~$g'''(u)$ to be any nonzero vector of~$B_i$ such that $g''(u)-g'''(u)\in K_B$.
It remains to define the mapping~$g'''$ for non-root vertices of the tree~$T_0$.
Let~$v_1^u,\ldots,v_d^u$ be the vectors assigned to the vertices on the path from~$u_i$ to the root in~$T$,
let~$v_1^A,\ldots,v_a^A$ be the basis of the space~$L_A$ as in the definition of the~$i$-frontier of~$T'$, and
let~$v_1^B,\ldots,v_b^B$ be the basis of the space~$L_B$ as in the definition of the~$i$-frontier of~$T''$.
If~$u$ is a non-root vertex of~$T_0$, we set~$g'''(u)$ to be the linear combination of the vectors
$v_1^u,\ldots,v_d^u$, $v_1^A,\ldots,v_a^A$, $v_1^B,\ldots,v_b^B$ with coefficients~$h(u)$.
Finally, let~$K'_u$, $K''_u$ and $K'''_u$ be the spaces~$K_u$ defined for trees~$T'$, $T''$ and $T'''$; we need to make this distinction here
since the spaces~$K'_u$, $K''_u$ and $K'''_u$ may differ.
Observe that for every non-root vertex~$u$ of~$T_0$,
the intersection of~$K'_u$ and $A_i$ is the same as the intersection of~$K'''_u$ and $A_i$.
The choice of~$g'''$ for the vertices of~$T'_A$ not contained in~$T_0$
implies that $K'_u\cap A_i\subseteq K'''_u\cap A_i$ for every vertex~$u$ of~$T'_A$ (indeed, even equality can be shown).
Similarly,
it holds that $K''_u\cap B_i\subseteq K'''_u\cap B_i$ for every vertex~$u$ of~$T''_B$.

It remains to show that every element~$x$ of~$M$
is a linear combination of the~$g'''$-image of the vertices on the path from~$f'''(x)$ to the root in~$T'''$.
Fix an element~$x$ of~$M$.
If~$x\in A_i\setminus C_A$, let~$u$ be the vertex~$f'''(x)=f'(x)$ and
observe that $x$ belongs to~$T'_A$ (as we have already established that $f'''(x)$ is well-defined).
Since~$K'_u\cap A_i\subseteq K'''_u\cap A_i$,
it follows that $x$ is contained in the linear hull of the~$g'''$-image of the vertices on the path from~$f'''(x)$ to the root.
If~$x\in C_A$, the vertex~$f'''(x)$ is a leaf of~$T_0$ such that
$x\in K'_u$ (as we have already established that $f'''(x)$ is well-defined).
Since~$C_A\subseteq A_i$, it follows that $x\in K'_u\cap A_i\subseteq K'''_u\cap A_i$ and
$x$ is contained in the linear hull of the~$g'''$-image of the vertices on the path from~$f'''(x)$ to the root in~$T'''$.
The cases~$x\in B_i$ and $x\in C_B\setminus C_A$ follow the same line of reasoning.
\end{proof}

To prove the main result of this section,
we will need the following auxiliary lemma.

\begin{lemma}
\label{lm:exrestrict}
Let~$(T,f,g)$ be a principal extended depth-decomposition of a vector matroid~$M$ with rank~$r$,
$(u_i,A_i,B_i)_{i\in\{0,\ldots,2r\}}$ a transversal sequence for~$(T,f,g)$, and
$(T',f',g')$ a solid extended depth-decomposition of~$M$.
The following holds for every~$i\in\{0,\ldots,2r\}$.
Let~$T_0$ be the rooted tree of the~$i$-frontier, $K$ the linear hull of the~$g'$-image of the vertices of~$T_0$,
$T_A$ the rooted tree obtained from~$T'$ by removing all branches~$S$ with~$\widehat{S}\subseteq B_i\cup K$ that are not~$i$-crossed, $g_A$ the restriction of~$g'$ to~$T_A$, and
$M_A$ the vector matroid obtained from the restriction of~$M$ to the elements of~$A_i$
by adding the~$g'$-image of the vertices of~$T_0$ (note that parallel elements may be added to~$M_A$ by this operation). There exists a function~$f_A$ from the elements of~$M_A$ to the leaves of~$T_A$ such that the triple~$(T_A,f_A,g_A)$ is an extended depth-decomposition of~$M_A$.
\end{lemma}

\begin{proof}
We define~$f_A$ as follows.
If~$x$ is contained in the restriction of~$M$ to the elements of~$A_i$ and $f'(x)$ is a leaf of~$T_A$,
we set~$f_A(x)$ to~$f'(x)$.
If~$x$ is contained in the restriction of~$M$ to the elements of~$A_i$ but~$f'(x)$ is not a leaf of~$T_A$,
we set~$f_A(x)$ to any leaf~$u$ such that $x\in K_u$.
Finally, if~$x$ is not contained in~$A_i$, then~$x=g'(v)$ for a vertex~$v$ of~$T_0$ and
we set~$f_A(x)$ to any leaf descended from~$v$.
	
We need to argue that $f_A(x)$ is well-defined for every element~$x$ of~$M_A$.
This is clear unless~$x$ is contained in the restriction of~$M$ to the elements of~$A_i$ and $f'(x)$ is not a leaf of~$T_A$.
In such a case, Lemma~\ref{lm:subset} implies that $f'(x)$ is a leaf of a branch~$S$ with~$\widehat{S}\subseteq B_i\cup K$.
It follows that the element~$x$ is contained in
\[A_i\cap \left(B_i\cup K\right)=\left(A_i\cap B_i\right)\cup\left(A_i\cap K\right)=K_{u_i}\cup\left(A_i\cap K\right)\subseteq K_{u_i}\cup K=K.\]
Since the~$g'$-image of~$T_0$ is a basis of~$K$,
the element~$x$ can be uniquely expressed as a linear combination of the elements of the~$g'$-image of~$T_0$.
Moreover, as~$(T',f',g')$ is an extended depth-decomposition of~$M$,
the~$g'$-preimage of the basis elements with non-zero coefficients in this linear combination
is contained in the path from~$f'(x)$ to the root of~$T'$.
The subpath of this path containing the preimage is also contained in~$T_0$, and
so~$T_0$ contains a path from one of its leaves, say~$u$, to its root such that the preimage is also contained in this path.
It follows that $x\in K_u$ and $f_A(x)$ can be set to~$u$.

To complete the proof,
we need to show that $(T_A,f_A,g_A)$ is an extended depth-decomposition of the matroid~$M_A$.
Observe that for every leaf~$u$ of~$T_A$,
the~$g$-image and $g_A$-image of the vertices on the path from~$u$ to the root are the same.
Hence, the space~$K_u$ is the same with respect to~$g$ and $g_A$ and it follows that $x\in K_{f_A(x)}$ for every element~$x$ of~$M_A$.
Finally,
since~$M_A$ contains all elements of~$A_i$ and a basis of~$K$ and
the~$g_A$-image of the vertices of~$T_A$ is a basis of the linear hull of~$A_i\cup K$,
the number of edges of~$T_A$ is the rank of~$M_A$.
It follows that $(T_A,f_A,g_A)$ is an extended depth-decomposition of the matroid~$M_A$.
\end{proof}

Before stating the main result of this section,
we need to establish that the number of frontiers for any fixed~$d$ is bounded.

\begin{lemma}
\label{lm:countfrontier}
For all integers~$d$ and $D$ and any finite field~$\FF$,
there exist at most~$d^{2D+1}D\lvert\FF\rvert^{(dD)^2}$ choices of a rooted tree~$T$ of depth at most~$d$, integers~$a$ and $b$, and
a mapping~$h$ from the non-root vertices of~$T$ to~$\FF^{D+a+b}$ such that $(T,D,a,b,h)$ is a frontier.
\end{lemma}

\begin{proof}
We first show that there are at most~$d^{2k-1}$ rooted trees~$T$ with~$k$ leaves and depth at most~$d$.
Such a tree can be encoded as follows: enumerate the~$k$ leaves of~$T$ in the depth-first-search order.
Let~$d_1$ be the depth of the first leaf, and
for~$i=2,\ldots,k$, let~$d_i$ be the depth of the~$i$-th leaf and
$m_i$ be the number of edges shared by the paths from the root to the~$(i-1)$-th and $i$-th leaves.
Note that the numbers~$d_1,\ldots,d_k$ and $m_2,\ldots,m_k$ determine the tree~$T$.
Since each~$d_i$ is a positive integer that is at most~$d$ and each~$m_i$ is a non-negative integer that is at most~$d-1$,
it follows that there are at most~$d^{2k-1}$ rooted trees~$T$ with~$k$ leaves and depth at most~$d$.
Summing over all the choices~$k=1,\ldots,D$,
we obtain that there are at most~$d^{2D}$ rooted trees with at most~$D$ leaves and depth at most~$d$.

Fix a tree~$T$ with at most~$D$ leaves and depth at most~$d$, and let~$m$ be the number of edges of~$T$.
Note that $m\le dD$, and $D+a+b = m$ if~$(T,D,a,b,h)$ is to be a frontier.
Hence, the integers~$a$ and $b$ can be chosen in at most~$m\le dD$ ways, and
the mapping~$h$ can be chosen in at most~$\lvert\FF\rvert^{m^2}\le \lvert\FF\rvert^{(dD)^2}$ ways (although some of these choices would not yield a frontier).
It follows that the number of choices of~$a$, $b$, and $h$, when~$T$ is fixed (which determines~$d$),
is at most~$dD\lvert\FF\rvert^{(dD)^2}$.
We conclude that for a fixed~$D$,
the number of frontiers~$(T,D,a,b,h)$ such that the depth~$T$ is at most~$d$
does not exceed~$d^{2D+1}D\lvert\FF\rvert^{(dD)^2}$.
\end{proof}

We are now ready to prove the main theorem of this section.

\begin{theorem}
\label{thm:algfin}
For the parameterization by a positive integer~$d$ and a prime power~$q$,
there exists a fixed parameter algorithm that for a vector matroid~$M$ over the~$q$-element field
either outputs that $\bd(M)$ is larger than~$d$, or
outputs an extended depth-decomposition of~$M$ with depth at most~$d$.
\end{theorem}

\begin{proof}
We first apply the algorithm from Theorem~\ref{thm:approx}.
The algorithm either outputs that the branch-depth of~$M$ is larger than~$d$ or
outputs a principal extended depth-decomposition of~$M$ with depth at most~$4^d$.
If the former applies, we stop and report that the branch-depth of~$M$ is larger than~$d$.
Otherwise, let~$(T,f,g)$ be the principal extended depth-decomposition returned by the algorithm.

Let~$r$ be the rank of the matroid~$M$,
$(u_i,A_i,B_i)_{i\in\{0,\ldots,2r\}}$ a transversal sequence for~$(T,f,g)$, and
$d_i$ the depth of~$u_i$ in~$T$.
For~$i=0,\ldots,2r$, the algorithm iteratively computes
the list of all frontiers~$(T_0,d_i,a,b,h)$ with depth at most~$d$ for which the following hold:
\begin{itemize}
\item there exists a vector matroid~$M'$ with rank~$\dim A_i+b$ such that
      the linear hull of its elements is contained in~$\lin{A_i\cup B_i}$,
\item~$M'$ contains the restriction of~$M$ to the elements of~$A_i$,
\item the matroid~$M'$ has an extended depth-decomposition~$(T_A,f_A,g_A)$ with depth at most~$d$, and
\item~$(T_0,d_i,a,b,h)$ is yielded by the procedure for creating~$i$-frontiers performed on~$(T_A,f_A,g_A)$ with respect to~$(T,f,g)$ and $(u_i,A_i,B'_i)$,
      where~$B'_i$ is intersection of the linear hull of the elements of~$M'$ and $B_i$.
      (Note that we cannot formally take~$i$-frontiers of~$(T_A,f_A,g_A)$ since~$(T,f,g)$ is not a depth-decomposition of~$M'$.)
\end{itemize}

If the branch-depth of~$M$ is at most~$d$,
then~$M$ has a solid extended depth-decomposition with depth~$\bd(M)$ by Theorem~\ref{thm:strongdecomp} and
the list is non-empty for every~$i=0,\ldots,2r$ by Lemma~\ref{lm:exrestrict}. By Lemma~\ref{lm:countfrontier}, the size of the list computed in the~$i$-th iteration does not exceed~$d^{2d_i+1}d_i\lvert\FF\rvert^{(dd_i)^2}$,
and is therefore bounded by a function of~$d$ and $\lvert\FF\rvert$ only (since~$d_i\le 4^d$ for every~$i$).
We emphasize that
$(T_0,d_i,a,b,h)$ is not required to be an~$i$-frontier of~$M'$ with respect to a principal depth-decomposition of~$M'$.

We now describe the iterations of the algorithm in detail.
For~$i=0$,
the list of frontiers contains a single element~$(R,0,0,0,h)$,
where~$R$ is the rooted tree that contains the root only and $h$ is the null function.
Hence, assume that $i>0$ and we have already computed the list for~$i-1$.
The iteration of the algorithm differs according to
whether~$u_i$ is the parent or a child of~$u_{i-1}$.

We start with the case where~$u_i$ is the parent of~$u_{i-1}$.
Then the depth of~$u_{i-1}$ is~$d_i+1$, and
the following is performed for every frontier~$(T_0,d_i+1,a,b,h)$ in the list from the previous iteration:
\begin{itemize}
\item If~$h$ is~$d_i$-matchable,
      we add~$(T_0,d_i,a+1,b,h')$ to the list for the~$i$-th iteration,
      where~$h'$ is a mapping from the non-root vertices of~$T_0$
      obtained from~$h$ by changing the~$(d_i+1)$-th coordinate into an~$A$-coordinate and
      applying an invertible linear transformation to the~$a+1$ $A$-coordinates,
      i.e., we fix such a linear transformation~$L$ and set~$h'(v)=L(h(v))$ for all vertices~$v$ of~$T_0$.
\item For every leaf~$v$ of~$T_0$ such that
      the linear hull of the~$h$-image of the vertices from~$v$ to the root contains the~$(d_i+1)$-th unit vector,
      we proceed as follows.
      Let~$T'_0$ be the tree obtained by removing the path from~$v$ to the first ancestor with at least two children, or
      to the root if there is no such ancestor.
      Let~$c$ be the number of edges on this path, and
      let~$h'$ be the restriction of~$h$ to the non-root vertices of~$T'_0$.
      If~$h'$ is~$d_i$-matchable and
      the linear hull of~$\image(h')$ restricted to the last~$b$ coordinates has dimension~$b$,
      we add~$(T'_0,d_i,a+1-c,b,h'')$ to the list for the~$i$-th iteration,
      where~$h''$ is a mapping from the non-root vertices of~$T_0'$ to~$\FF^{d_i+a+1-c+b}$
      obtained from~$h'$ by changing the~$(d_i+1)$-th coordinate into an~$A$-coordinate and
      applying any full rank linear transformation~$L:\FF^{a+1} \to \FF^{a+1-c}$ to its~$a+1$ $A$-coordinates.
\end{itemize}
We next describe the case that $u_i$ is a child of~$u_{i-1}$; note that $d_i=d_{i-1}+1$ in this case.
Let~$X_i$ be the set of~$d_i$-dimensional vectors that
formed by~$d_i$ coordinates of the elements of~$M$ contained in~$K_{u_i}$,
expressed with respect to the basis formed by the~$g$-image of the vertices on the path from the root of~$T$ to~$u_i$ (in this order).
The following is performed for every frontier~$(T_0,d_i-1,a,b,h)$ in the list from the previous iteration:
\begin{itemize}
\item For every invertible linear transformation~$L$ of the~$B$-coordinates of~$h$ and
      every leaf~$v$ of~$T_0$ such that
      the linear hull of the~$L(h)$-image of the vertices on the path from~$v$ to the root contains the~$(d_i-1+a+b)$-th unit vector,
      we proceed as follows.
      Let~$h'$ be the mapping obtained from~$L(h)$ by changing the last coordinate into a new~$d_i$-th coordinate.
      If for every~$x\in X_i$ there exists a vertex~$v'$ of~$T_0$ such that
      $x$ is contained in the linear hull of the~$h'$-image of the vertices
      on the path from~$v'$ to the root of~$T_0$ restricted to the first~$d_i$ coordinates,
      we add~$(T_0,d_i,a,b-1,h')$ to the list for the~$i$-th iteration.
\item For every rooted tree~$T'_0$ obtained from~$T_0$ by adding a new leaf~$v$ joined by a path with~$c\ge 1$ edges to~$T_0$
      in a way that the depth of~$T'_0$ is at most~$d$, we proceed as follows.
      Let~$h'$ be the mapping obtained from~$h$ by adding new~$c$ zero~$B$-coordinates and
      extending it to each vertex on the newly added path by mapping the vertex to one of the unit vector for the new coordinates in a way that
      the dimension of of the~$h'$-image is~$d_i+a+b+c-1$.
      We next consider all invertible linear transformations~$L$ of the~$b+c$ $B$-coordinates such that
      the~$L(h')$-image of the vertices on the path from~$v$ to the root contains the~$(d_i+a+b+c-1)$-th unit vector, and,
      for each such~$L$,
      we obtain the mapping~$h''$ from~$L(h')$ by turning the last coordinate into a new~$d_i$-th coordinate.
      If for every~$x\in X_i$ there exists a vertex~$v'$ of~$T'_0$ such that
      $x$ is contained in the linear hull of the~$h''$-image of the vertices
      on the path from~$v'$ to the root of~$T'_0$ restricted to the first~$d_i$ coordinates,
      we add~$(T'_0,d_i,a,b+c-1,h'')$ to the list for the~$i$-th iteration.
\end{itemize}
The inspection of the steps of the algorithm yields that
if~$(T_0,d_i,a,b,h)$ is added to the list in the~$i$-th iteration,
then~$h$ is~$d_i$-matchable and the dimension of the~$h$-image is~$d_i+a+b$.
Hence, the computed list of frontiers in the~$i$-th iteration
contains exactly the frontiers described at the beginning of the proof.

As we have already argued,
if the branch-depth of~$M$ is at most~$d$, the list of frontiers is non-empty in all iterations.
In particular, if the list becomes empty, we stop and report that the branch-depth of~$M$ exceeds~$d$.
Hence, we assume that the list is non-empty in all iterations.
If the final list is non-empty,
it contains a single element~$(T_{2r},d_{2r},a_{2r},b_{2r},h_{2r})=(R,0,0,0,h)$
where~$R$ is the rooted tree that contains the root only and $h$ is the null mapping.
We now define~$(T_i,d_i,a_i,b_i,h_i)$ for all~$i=0,\ldots,2r$
by tracing back the lists of frontiers for~$i=2r,2r-1,\ldots,1$:
if~$(T_i,d_i,a_i,b_i,h_i)$ was added to the list in the iteration~$i$,
then let~$(T_{i-1},d_{i-1},a_{i-1},b_{i-1},h_{i-1})$ be a frontier that triggered~$(T_i,d_i,a_i,b_i,h_i)$ to be added to the list.

We next use this sequence of frontiers to construct an extended depth-de\-compo\-si\-tion of depth at most~$d$.
We first define inductively for~$i=0,\ldots,2r$ linear maps~$h^M_i$ from~$\FF^{d_i+a_i}$ to the linear hull of the elements of~$M$.
The mapping~$h^M_0$ is the null mapping (note that $d_0+a_0=0$).
If~$u_i$ is a child of~$u_{i-1}$, we obtain~$h^M_i$ from~$h^M_{i-1}$ by inserting the~$d_i$-th coordinate,
mapping the~$d_i$-th unit vector to~$g(u_i)$ and extending linearly to~$\FF^{d_i+a_i}$.
If~$u_i$ is the parent of~$u_{i-1}$,
there exists a linear mapping~$L$ from~$\FF^{d_{i-1}+a_{i-1}}$ to~$\FF^{d_i+a_i}$ such that
the restriction of~$h_i$ to the first~$d_i+a_i$ coordinates is
the~$L$-transformation of the restriction of~$h_{i-1}$ to the first~$d_{i-1}+a_{i-1}$ coordinates;
we set~$h^M_i$ to be the~$L$-transformation of~$h^M_{i-1}$, i.e., $h^M_i=L(h^M_{i-1})$.

As the next step, we define inductively for~$i=0,\ldots,2r$ rooted trees~$T^M_i$ such that $T^M_i$ contains~$T_i$ as a subtree, and
mappings~$g^M_i$ from the vertices of~$T^M_i$ that are not contained in~$T_i$ to the linear hull of the elements of~$M$.
The tree~$T^M_0$ is the rooted tree that contains the root only and $g^M_0$ is the null mapping.
If~$T_i=T_{i-1}$, we set~$T^M_i=T^M_{i-1}$ and $g^M_i=g^M_{i-1}$.
Otherwise, we proceed as follows.
If~$u_i$ is a child of~$u_{i-1}$,
then~$T_i$ has been constructed from~$T_{i-1}$ by attaching a path to one of its vertices, so we obtain~$T^M_i$ by attaching a path of the same length to the corresponding vertex of~$T^M_{i-1}$ and set~$g^M_i=g^M_{i-1}$.
If~$u_i$ is the parent of~$u_{i-1}$,
we set~$T^M_i=T^M_{i-1}$, define~$g^M_i(v)=g^M_{i-1}(v)$ for vertices~$v$ of~$T^M_{i-1}$ not contained in~$T_{i-1}$ and
$g^M_i(v)=h^M_{i-1}(h_{i-1}(v))$ for vertices~$v$ of~$T_{i-1}$ not contained in~$T_i$.
Observe that $T^M_{i-1}$ is a subtree of~$T^M_i$ and $g^M_{i-1}$ is a restriction of~$g^M_i$ for all~$i=1,\ldots,2r$. Furthermore, the depth of all trees~$T^M_i$, $i=0,\ldots,2r$ is at most~$d$, and
the number of edges of~$T^M_i$ is~$\dim A_i+b_i$ for every~$i=0,\ldots,2r$.

We now establish the existence of a mapping~$f^M$ such that $(T^M_{2r},f^M,g^M_{2r})$ is an extended depth-decomposition of~$M$.
Let~$x$ be an element of~$M$ and let~$i$ be the smallest index such that $x\in K_{u_i}$.
It follows that $u_{i-1}$ is the parent of~$u_i$ and $x\in K_{u_i}\setminus K_{u_{i-1}}$.
Since~$(T_i,d_i,a_i,b_i,h_i)$ was included to the list in the iteration~$i$,
there exists a vertex~$v$ of~$T_i$ such that
the linear hull of the~$h_i$-image of the vertices on the path from~$v$ to the root of~$T_i$
contains the vector defined by the~$d_i$ coordinates of~$x$
with respect to the basis formed by the~$g$-image of the vertices on the path from the root to~$u_i$ in~$T$ (in this order).
The definition of~$g^M_i$ and the construction of the list of frontiers imply that
the the linear hull of the~$g^M_{2r}$-image of the vertices of the path from~$v$ to the root of~$T_i$ contains~$x$.
Hence, $f^M(v)$ can be chosen to be any leaf descendant of~$v$ in~$T^M_{2r}$.

We have shown that if the final list is non-empty,
the matroid~$M$ has an extended depth-decomposition of depth at most~$d$.
Since the trees~$T^M_i$ and the mappings~$h^M_i$ and $g^M_i$ can be constructed algorithmically,
the extended depth-decomposition~$(T^M_{2r},f^M,g^M_{2r})$ can also be constructed algorithmically.
Finally, observe that the number of iterations in the algorithm is twice the rank of~$M$,
the size of the list computed in each iteration is bounded by a function of~$d$ and $q$ only, and
that the number of steps needed for each element of these lists to be processed
is bounded by a function of~$d$ and $q$ times a polynomial in the size of~$M$.
Specifically, each list has at most~$d^{2^{2d+1}+1}4^dq^{(d4^d)^2}=q^{2^{O(d)}}$ elements and
the number of steps needed to process each of their elements
is bounded by a polynomial in the size of~$M$ times~$q^{(d4^d)^2}=q^{2^{O(d)}}$.
Hence the presented algorithm is a fixed parameter algorithm for parameterization by~$d$ and $q$.
\end{proof}

Theorems~\ref{thm:capacity} and \ref{thm:algfin} yield the following corollary.

\begin{corollary}
\label{cor:algfin}
For the parameterization by a positive integer~$d$ and a prime power~$q$,
there exists a fixed parameter algorithm that for a vector matroid~$M$ over the~$q$-element field
either outputs that $\bd(M)$ is larger than~$d$, or
computes~$\bd(M)$ and outputs an extended depth-decomposition with this depth such that
every branch is at capacity.
\end{corollary}

\section{Algorithm for rational matrices}
\label{sec:rational}

In this section, we adopt the algorithm presented in Section~\ref{sec:finite} to matroids over rationals.
We start with an auxiliary lemma on linear combinations appearing in extended depth-decompositions.
We remark that the bound of~$2^{2d-1}$ in Lemma~\ref{lm:express}
can be replaced with~$d\cdot 2^{d-1}$ using a slightly more careful analysis.

\begin{lemma}
\label{lm:express}
Let~$M$ be a vector matroid and $(T,f)$ a depth-decomposition of~$M$ with depth~$d$ such that every branch is at capacity.
There exists a mapping~$g$ such that $(T,f,g)$ is an extended depth-decomposition of~$M$ and
every element of~$\image(g)$ is a linear combination of at most~$2^{2d-1}$ elements of~$M$.
\end{lemma}

\begin{proof}
We show that it is possible to choose a mapping~$g$ in such a way that
the~$g$-image of a vertex at depth~$i>0$ is a linear combination of at most~$2^{d+i-1}$ elements of~$M$.
Let~$u_0,\ldots,u_k$ be the vertices on the path from the root of~$T$ to a leaf, in that order.
Let~$i_0<\cdots<i_{\ell}$ be the sequence of indices such that $i_0=0$, $i_{\ell}=k$, and $u_{i_1},\ldots,u_{i_{\ell-1}}$
are exactly the vertices among~$u_1,\ldots,u_{k-1}$ that have at least two children.
By Lemma~\ref{lm:allintersect},
each~$K_{u_{i_j}}$, $j=0,\ldots,\ell$, is the same for any function~$g$ such that $(T,f,g)$ is an extended depth-decomposition of~$M$.
Let~$L_j$ be this space, which has dimension~$i_j$ and is determined by the choice of~$T$ and $f$ only, and
note that the elements~$g(u_{i_{j-1}+1}),\ldots,g(u_{i_j})$ always form a basis of~$L_j/L_{j-1}$.
We will establish that it is possible to choose these elements in such a way that $g(u_i)$, $i=1,\ldots,k$,
is a linear combination of at most~$2^{d+i-1}$ elements of~$M$.

Suppose that we have already fixed~$g(u_1),\ldots,g(u_{i-1})$, and let~$j$ be the smallest index such that $i\le i_j$.
Note that $\dim L_j>i-1$.
If~$i_j=k$, then~$L_j/L_{j-1}$ is the linear hull of the elements in~$f^{-1}(u_k)$ quotioned by~$L_{j-1}$ and
so we choose~$g(u_i)$ to be any element of~$f^{-1}(u_k)$ that is linearly independent of~$g(u_1),\ldots,g(u_{i-1})$.
Hence, we assume that $i_j<k$ in the rest of the proof.

Let~$K$ be the linear hull of~$g(u_1),\ldots,g(u_{i-1})$, and
let~$C$ be the at most~$(1+\ldots+2^{i-1})2^{d-1}=(2^i-1)2^{d-1}$ elements of~$M$
appearing in the linear combinations used to express~$g(u_1),\ldots,g(u_{i-1})$.
Consider any two branches rooted at~$u_{i_j}$ and let~$A$ and $B$ be the~$f$-preimages of the leaves of the two branches.
By Lemma~\ref{lm:allintersect}, $L_j$ is the intersection of the linear hulls of~$A\cup L_{j-1}$ and $B\cup L_{j-1}$.
Since the linear hulls of~$A\cup L_{j-1}$ and $A\cup K$ are the same, and
the linear hulls of~$B\cup L_{j-1}$ and $B\cup K$ are the same,
$L_j$ is also the intersection of the linear hulls of~$A\cup K$ and $B\cup K$.
Since the dimension of~$L_j$ is larger than~$\dim K=i-1$,
$K$ is a proper subspace of~$L_j$ and
so the matroid~$(M/K)[A\cup B]$ contains a circuit~$X$ such that
both~$X\cap A$ and $X\cap B$ are non-empty.
Hence, $L_j \setminus K$ contains a nonzero element~$w$ that
is a linear combination of the elements of~$(X\cap A)\cup K\subseteq (X\cap A)\cup\lin{C}$ and
also a linear combination of the elements of~$(X\cap B)\cup K\subseteq (X\cap B)\cup\lin{C}$.
By Proposition~\ref{prop:circuit}, every circuit of~$M$ contains at most~$2^d$ elements, and
since every circuit of~$M/K$ is a subset of a circuit of~$M$,
we can by symmetry assume that $\lvert X\cap A\rvert\le 2^{d-1}$.
Hence, $w$ is a linear combination of at most~$2^{d+i-1}$ elements of~$M$ contained in~$(X\cap A)\cup C$ and
we can set~$g(u_i)$ to be the vector~$w$.
\end{proof}

The next lemma allows us to convert a representation of a matroid with small branch-depth over the rational numbers
to a representation of an isomorphic matroid over a finite field.

\begin{lemma}
\label{lm:largeq}
There exists an algorithm such that
\begin{itemize}
\item the input of the algorithm is an integer~$d\ge 1$ and a matroid~$M$ represented over~$\QQ$ such that
      all entries of the vectors in the representation are integers between~$-K$ and $+K$,
\item the running time of the algorithm is polynomial in the number of elements of~$M$ and $\log K$, and
\item the algorithm either outputs that the branch-depth of~$M$ is larger than~$d$ or
      computes a matroid~$M'$ represented over~$\FF_q$ for some~$q\le K^{2^{4^{d+1}}}2^{2^{2^{d+2}}}$, along with
      an isomorphism between~$M$ and $M'$.
\end{itemize}
\end{lemma}

\begin{proof}
We describe how the algorithm from the statement of the lemma proceeds.
First, the algorithm from Theorem~\ref{thm:approx} is invoked for the input matroid~$M$ and an integer~$d$.
If the algorithm outputs that the branch-depth of~$M$ is larger than~$d$, then we stop and report this.
Otherwise, we obtain a principal depth-decomposition~$(T,f,g)$ of~$M$ with depth at most~$4^d$.
For each element~$x\in M$, we compute its representation~$x'$ with respect to the basis~$\image(g)$.
Note that all entries of the vector~$x'$ are zero except for those that
correspond to the~$g$-image of the vertices on the path from~$f(x)$ to the root.
Hence, computing the entries of~$x'$ requires solving a system of at most~$4^d$ equations with integer coefficients between~$-K$ and $+K$,
which implies that the fractions appearing as the entries of~$x'$ have both numerator and denominator at most~$(K4^d)^{4^d}$ by Cramer's Rule.
We define~$\varphi(x)$ to be the integer vector obtained from~$x'$ by multiplying all its entries
by the least common multiple of the denominators of the fractions that form the entries of~$x'$.
Since the vector~$x'$ has at most~$4^d$ non-zero entries,
all entries of~$\varphi(x)$ are between~$-K'$ and $K'$ where~$K'=(K4^d)^{4^{2d}}$.

Let~$q$ be any prime larger than~$(2K'2^{4^d})^{2^{4^d}}$ and
consider the vector matroid~$M'$ over~$\FF_q$ formed by the vectors~$\image(\varphi)$.
Note that there exists such a prime~$q$ that is at most~$2\cdot (2K'2^{4^d})^{2^{4^d}}\le K^{2^{4^{d+1}}}2^{2^{2^{d+2}}}$.
Observe that $(T,f',g')$ is a depth-decomposition of~$M'$
where~$f'(\varphi(x))$ is set to~$f(x)$ and
$g'(v)$ is the unit vector whose non-zero coordinate corresponds to~$g(v)$.
Indeed, the only non-zero coordinates of the vector~$\varphi(x)$
are those corresponding to the~$g$-images of the vertices on the path from~$f(x)$ to the root.
We conclude the branch-depth of~$M'$ is at most~$4^d$.

It is well-known that if a set of integer vectors is linearly dependent over~$\QQ$,
then it is linearly dependent over any finite field;
in particular,
if~$X$ is a linearly dependent set of elements of~$M$, then~$\varphi(X)$ is linearly dependent over~$\FF_q$.
On the other hand, the choice of~$q$ yields that
if~$X$ is an independent set of at most~$2^{4^d}$ elements of~$M$,
then~$\varphi(X)$ is independent over~$\FF_q$.
Since the branch-depths of both~$M$ and $M'$ are at most~$4^d$,
neither~$M$ nor~$M'$ has a circuit with more than~$2^{4^d}$ elements by Proposition~\ref{prop:circuit}.
Hence, the matroids~$M$ and $M'$ have the same set of circuits.
It follows that the matroids~$M$ and $M'$ are isomorphic and $\varphi$ is an isomorphism between them.
\end{proof}

Using Lemmas~\ref{lm:express} and \ref{lm:largeq}, we prove the main theorem of this section.

\begin{theorem}
\label{thm:algrational}
For the parameterization by positive integers~$d$ and $K$,
there exists a fixed parameter algorithm that, for a vector matroid~$M$ over~$\QQ$ such that
the entries of all vectors in~$M$ are between~$-K$ and $+K$,
either outputs that $\bd(M)$ is larger than~$d$, or
computes~$\bd(M)$ and outputs an extended depth-decomposition~$(T,f,g)$ of~$M$ with depth~$\bd(M)$.
Moreover, the entry complexity of the vectors in~$\image(g)$ is bounded by a function of~$d$ and $K$.
\end{theorem}

\begin{proof}
Fix integers~$d$ and $K$.
We first run the algorithm from Lemma~\ref{lm:largeq} that
either outputs that the branch-depth of~$M$ is larger than~$d$ or
outputs a matroid~$M'$ with a representation over~$\FF_q$ for~$q\le K^{2^{4^{d+1}}}2^{2^{2^{d+2}}}$ that is isomorphic to~$M$ and
an isomorphism~$\varphi$ between~$M$ and $M'$.
Hence, we can use the algorithm from Corollary~\ref{cor:algfin} to decide whether the branch-depth of~$M'$ is at most~$d$, and
if so, to construct a depth-decomposition~$(T,f')$ of depth~$\bd(M')$ such that
every branch of~$T$ is at capacity.
If the branch-depth of~$M'$ exceeds~$d$, then the branch-depth of~$M$ also exceeds~$d$, so we stop and report this.
Otherwise, $(T,f)$ is a depth-decomposition of depth~$\bd(M)=\bd(M')$
where~$f$ is obtained from~$f'$ using the isomorphism between~$M$ and $M'$,
i.e., $f(x)=f'(\varphi(x))$ for every element~$x$ of~$M$.
We now use the procedure described in the proof of Lemma~\ref{lm:express} to compute a function~$g$ such that
$(T,f,g)$ is an extended depth-decomposition of~$M$.
Since computations given in the proof of Lemma~\ref{lm:express} involve solving systems of equations with at most~$2^{2d-1}$ variables,
the entry complexity of the vectors in~$\image(g)$ is bounded by~$O(d 2^{2d}\log K)$.
\end{proof}

Theorem~\ref{thm:algrational} implies Theorem~\ref{thm:alg2} as follows.

\begin{proof}[Proof of Theorem~\ref{thm:alg2}]
Let~$M$ be the matroid formed by columns of~$A$.
The algorithm from Theorem~\ref{thm:algrational}
either reports that $\bd(A)>d$ or
outputs an extended depth-decomposition~$(T,f,g)$ of~$M$ with branch-depth~$\bd(A)$ such that
the entry complexity of~$\image(g)$ is bounded by a function of~$d$ and $K$.
Let~$A'$ be the matrix from Theorem~\ref{thm:bdbasis}.
Since each element~$x$ of the matroid~$M$ is a linear combination of at most~$\bd(A)\le d$ elements from~$\image(g)$,
which are those forming the~$g$-image of the vertices on the path from~$f(x)$ to the root of~$T$,
each entry of the matrix~$A'$ can be obtained by solving a system of at most~$d$ linear equations
where the entry complexity of the coefficients and the right hand side is bounded by~$O(d 2^{2d}\log K)$.
Hence, the entry complexity of~$A'$ is bounded by~$O(d^2 2^{2d}\log K)$.
\end{proof}

\section*{Acknowledgement}

The authors would like to thank the anonymous reviewers for their comments,
which significantly improved the presentation of the paper and its results.

\bibliographystyle{bibstyle}
\bibliography{bdepth_ip}

\begin{thebibliography}{10}
\providecommand{\url}[1]{\texttt{#1}}
\providecommand{\urlprefix}{URL }
\providecommand{\eprint}[2][]{\url{#2}}

\bibitem{AH}
M.~Aschenbrenner and R.~Hemmecke: \emph{Finiteness theorems in stochastic
  integer programming}, Foundations of Computational Mathematics \textbf{7}
  (2007), 183--227.

\bibitem{AykPC04}
C.~Aykanat, A.~Pinar and {\"U}.~V. \c{C}ataly{\"u}rek: \emph{Permuting sparse
  rectangular matrices into block-diagonal form}, SIAM Journal on Scientific
  Computing \textbf{25} (2004), 1860--1879.

\bibitem{BerCCFLMT15}
M.~Bergner, A.~Caprara, A.~Ceselli, F.~Furini, M.~E. L{\"u}bbecke, E.~Malaguti
  and E.~Traversi: \emph{Automatic {D}antzig--{W}olfe reformulation of mixed
  integer programs}, Mathematical Programming \textbf{149} (2015), 391--424.

\bibitem{BodK96}
H.~L. Bodlaender and T.~Kloks: \emph{Efficient and constructive algorithms for
  the pathwidth and treewidth of graphs}, J. Algorithms \textbf{21} (1996),
  358--402.

\bibitem{BorFM98}
R.~Bornd{\"o}rfer, C.~E. Ferreira and A.~Martin: \emph{Decomposing matrices
  into blocks}, SIAM Journal on Optimization \textbf{9} (1998), 236--269.

\bibitem{MC}
L.~Chen and D.~Marx: \emph{Covering a tree with rooted subtrees--parameterized
  and approximation algorithms}, 29th Annual ACM-SIAM Symposium on Discrete
  Algorithms, SODA (2018), 2801--2820.

\bibitem{Cou90}
B.~Courcelle: \emph{The monadic second-order logic of graphs. {I}. recognizable
  sets of finite graphs}, Inform. and Comput. \textbf{85} (1990), 12--75.

\bibitem{CunG07}
W.~H. Cunningham and J.~Geelen: \emph{On integer programming and the
  branch-width of the constraint matrix}, Integer Programming and Combinatorial
  Optimization, 12th International IPCO Conference (2007), 158--166.

\bibitem{CygFKLMPPS15}
M.~Cygan, F.~V. Fomin, L.~Kowalik, D.~Lokshtanov, D.~Marx, M.~Pilipczuk,
  M.~Pilipczuk and S.~Saurabh: Parameterized Algorithms, 2015.

\bibitem{DevKO19}
M.~DeVos, O.~{Kwon} and S.~Oum: \emph{Branch-depth: Generalizing tree-depth of
  graphs}, preprint arXiv:1903.11988  (2019).

\bibitem{DvoEGKO17}
P.~Dvo{\v{r}}{\'a}k, E.~Eiben, R.~Ganian, D.~Knop and S.~Ordyniak:
  \emph{Solving integer linear programs with a small number of global variables
  and constraints}, 26th International Joint Conference on Artificial
  Intelligence (2017), 607--613.

\bibitem{Edm67}
J.~Edmonds: \emph{Systems of distinct representatives and linear algebra},
  Journal of Research of the National Bureau of Standards \textbf{718} (1967),
  242--245.

\bibitem{EisHKKLO19}
F.~Eisenbrand, C.~Hunkenschr{\"{o}}der, K.~Klein, M.~Kouteck{\'{y}}, A.~Levin
  and S.~Onn: \emph{An algorithmic theory of integer programming}, preprint
  arXiv:1904.01361  (2019).

\bibitem{EisHK18}
F.~Eisenbrand, C.~Hunkenschr{\"o}der and K.-M. Klein: \emph{{Faster Algorithms
  for Integer Programs with Block Structure}}, 45th International Colloquium on
  Automata, Languages, and Programming, ICALP (2018), 49:1--49:13.

\bibitem{FerH98}
M.~C. Ferris and J.~D. Horn: \emph{Partitioning mathematical programs for
  parallel solution}, Mathematical Programming \textbf{80} (1998), 35--61.

\bibitem{FomPRS18}
F.~V. Fomin, F.~Panolan, M.~S. Ramanujan and S.~Saurabh: \emph{On the
  optimality of pseudo-polynomial algorithms for integer programming}, 26th
  Annual European Symposium on Algorithms, {ESA} (2018), 31:1--31:13.

\bibitem{GamL10}
G.~Gamrath and M.~E. L{\"u}bbecke: \emph{Experiments with a generic
  {D}antzig--{W}olfe decomposition for integer programs}, Experimental
  Algorithms (2010), 239--252.

\bibitem{GO}
R.~Ganian and S.~Ordyniak: \emph{The complexity landscape of decompositional
  parameters for {ILP}}, Artificial Intelligence  (2018).

\bibitem{GOR}
R.~Ganian, S.~Ordyniak and M.~S. Ramanujan: \emph{Going beyond primal treewidth
  for {(M)ILP}}, 31st AAAI Conference on Artificial Intelligence (2017),
  815--821.

\bibitem{GavKO12}
T.~Gaven{\v c}iak, D.~Kr{\'{a}}l' and S.~Oum: \emph{Deciding first order
  properties of matroids}, 39th International Colloquium Automata, Languages,
  and Programming, ICALP (2012), 239--250.

\bibitem{GeeGRW03}
J.~Geelen, A.~Gerards, N.~Robertson and G.~Whittle: \emph{On the excluded
  minors for the matroids of branch-width k}, Journal of Combinatorial Theory,
  Series B \textbf{88} (2003), 261 -- 265.

\bibitem{Hal93}
P.~Halmos: Finite-Dimensional Vector Spaces, Undergraduate Texts in
  Mathematics, 1993.

\bibitem{HKW}
R.~Hemmecke, M.~K\"oppe and R.~Weismantel: \emph{Graver basis and proximity
  techniques for block-structured separable convex integer minimization
  problems}, Mathematical Programming \textbf{145} (2014), 1--18.

\bibitem{HOR}
R.~Hemmecke, S.~Onn and L.~Romanchuk: \emph{N-fold integer programming in cubic
  time}, Mathematical Programming \textbf{137} (2013), 325--341.

\bibitem{Hli03a}
P.~Hlin{\v{e}}n{\'y}: \emph{Branch-width, parse trees, and monadic second-order
  logic for matroids}, 20th Annual Symposium on Theoretical Aspects of Computer
  Science, STACS (2003), 319--330.

\bibitem{Hli03b}
P.~Hlin{\v{e}}n{\'y}: \emph{On matroid properties definable in the {MSO}
  logic}, 27th International Symposium on Mathematical Foundations of Computer
  Science, MFCS (2003), 470--479.

\bibitem{Hli06}
P.~Hlin{\v{e}}n{\'y}: \emph{Branch-width, parse trees, and monadic second-order
  logic for matroids}, Journal of Combinatorial Theory, Series B \textbf{96}
  (2006), 325 -- 351.

\bibitem{HliO07}
P.~Hlin{\v{e}}n{\'y} and S.~Oum: \emph{Finding branch-decompositions and
  rank-decompositions}, 15th Annual European Symposium, ESA (2007), 163--174.

\bibitem{HliO08}
P.~Hlin{\v{e}}n{\'y} and S.~Oum: \emph{Finding branch-decompositions and
  rank-decompositions}, SIAM Journal on Computing \textbf{38} (2008),
  1012--1032.

\bibitem{JeoKO18}
J.~Jeong, E.~J. Kim and S.~il~Oum: \emph{{Finding Branch-Decompositions of
  Matroids, Hypergraphs, and More}}, 45th International Colloquium on Automata,
  Languages, and Programming, ICALP (2018), 80:1--80:14.

\bibitem{Kannan:1987}
R.~Kannan: \emph{{Minkowski's convex body theorem and integer programming}},
  Mathematics of Operations Research \textbf{12} (1987), 415–440.

\bibitem{KarKLM17}
F.~Kardo{\v s}, D.~Kr{\'a}l', A.~Liebenau and L.~Mach: \emph{First order
  convergence of matroids}, Eur. J. Comb \textbf{59} (2017), 150--168.

\bibitem{KhaEE18}
T.~Khaniyev, S.~Elhedhli and F.~S. Erenay: \emph{Structure detection in
  mixed-integer programs}, INFORMS Journal on Computing \textbf{30} (2018),
  570--587.

\bibitem{KnoKM17}
D.~Knop, M.~Kouteck{\'y} and M.~Mnich: \emph{Voting and bribing in
  single-exponential time}, 34th Symposium on Theoretical Aspects of Computer
  Science, STACS (2017), 46:1--46:14.

\bibitem{KouLO18}
M.~Kouteck{\'y}, A.~Levin and S.~Onn: \emph{A parameterized strongly polynomial
  algorithm for block structured integer programs}, 45th International
  Colloquium on Automata, Languages, and Programming, ICALP (2018),
  85:1--85:14.

\bibitem{Lenstra:1983}
H.~W. Lenstra, Jr.: \emph{Integer programming with a fixed number of
  variables}, Mathematics of Operations Research \textbf{8} (1983), 538--548.

\bibitem{MarMH13}
S.~Margulies, J.~Ma and I.~V. Hicks: \emph{The {C}unningham-{G}eelen method in
  practice: Branch-decompositions and integer programming}, INFORMS Journal on
  Computing \textbf{25} (2013), 599--610.

\bibitem{Oxl11}
J.~Oxley: Matroid Theory, \emph{Oxford Graduate Texts in Mathematics},
  volume~21, 2011.

\bibitem{VanW10}
F.~Vanderbeck and L.~A. Wolsey: \emph{Reformulation and decomposition of
  integer programs}, in: 50 Years of Integer Programming 1958-2008: From the
  Early Years to the State-of-the-Art (2010), 431--502.

\bibitem{WanR13}
J.~Wang and T.~Ralphs: \emph{Computational experience with hypergraph-based
  methods for automatic decomposition in discrete optimization}, Integration of
  AI and OR Techniques in Constraint Programming for Combinatorial Optimization
  Problems (2013), 394--402.

\bibitem{WeiK71}
R.~L. Weil and P.~C. Kettler: \emph{Rearranging matrices to block-angular form
  for decomposition (and other) algorithms}, Management Science \textbf{18}
  (1971), 98--108.

\end{thebibliography}
\end{document}